\renewcommand{\ALG@name}{Pseudo-algorithm}
\numberwithin{equation}{section}
\theoremstyle{plain}
\newtheorem{lemma}{Lemma}
\newtheorem{theorem}{Theorem}
\newtheorem{corollary}{Corollary}
\theoremstyle{remark}
\newtheorem{remark}{Remark}
\newcommand{\blind}{0}
\begin{document}

\def\spacingset#1{\renewcommand{\baselinestretch}%
{#1}\small\normalsize} \spacingset{1}


\if0\blind
{
  \title{\bf Modal clustering asymptotics with applications to bandwidth selection}
  \author{Alessandro Casa\hspace{.2cm}\\
    Department of Statistical Sciences, University of Padova\vspace{.3cm}\\
    Jos\'e E. Chac\'on \\
    Department of Mathematics, University of Extremadura \vspace{.1cm}\\
    and \vspace{.1cm}\\
    Giovanna Menardi \\
    Department of Statistical Sciences, University of Padova}
  \maketitle
} \fi

\bigskip
\begin{abstract}
Density-based clustering relies on the idea of linking groups to some specific features of the probability distribution underlying the data. The reference to a true, yet unknown, population structure allows to frame the clustering problem in a standard inferential setting, where the concept of ideal population clustering is defined as the partition induced by the true density function. The nonparametric formulation of this approach, known as modal clustering, draws a correspondence between the groups and the domains of attraction of the density modes. Operationally, a nonparametric density estimate is required and a proper selection of the amount of smoothing, governing the shape of the density and hence possibly the modal structure, is crucial to identify the final partition. In this work, we address the issue of density estimation for modal clustering from an asymptotic perspective. A natural and easy to interpret metric to measure the distance between density-based partitions is discussed, its asymptotic approximation explored, and employed to study the problem of bandwidth selection for nonparametric modal clustering.
\end{abstract}

\noindent%
{\it Keywords:} nonparametric clustering, kernel estimator, mean shift clustering, plug-in bandwidth, gradient bandwidth.

\spacingset{1.45}
\section{Introduction}

Clustering is commonly referred to as the task of finding groups in a set of data points (see \cite{KR05}, \cite{Eal11} or \cite{Hal16}). While intuitively clear, this task is, in fact, far from being accurately defined. The density-based approach attempts to circumscribe this issue by framing the problem into a statistically rigorous setting where the observed data are assumed to be realizations of a random variable, and the clusters are defined with respect to some characteristic of its underlying probability distribution.

In this sense, a clustering procedure should not be limited to simply produce a partition of the observed data; instead, it must allow to obtain a \textit{whole-space clustering}, that is a partition of the whole sample space \citep{Bal06, chacon15}. 
In any case, each methodology is characterized by the way in which the clusters are defined in terms of the true distribution, leading to the concept of \textit{ideal population clustering}. By serving as a reference ``ground truth'' to aim at, this concept introduces a benchmark to evaluate the performance of data-based partitions. 

The ideal population goal in density-based clustering can be defined in terms of two different paradigms: the \textit{model-based} approach, where each cluster is associated to a parametric mixture component, and the \textit{modal} one (see respectively \cite{mcnicholas16} and \cite{menardi16} for some recent reviews). This paper focuses on the latter formulation, whose name stems from the notion of clusters as the ``domains of attraction" of the modes of the true density underlying the data \cite{stuetzle03}.


Therefore, in practice density estimation assumes a key role in order to approximate the ideal population goal of modal clustering. While the modal formulation does not preclude using a parametric density estimate as a first step to perform a data-based modal clustering \cite{chacon19, scrucca16}, a long-standing practice resorts to nonparametric estimators. Precisely, in this paper the focus lies on those estimators based on kernel smoothing (see e.g. \cite{chaconduongbook} and \cite{wandjones}).

Under- or over-smoothed estimates may lead to deceiving indications about the modal structure of the underlying density function, and this problem is usually quantified through some measure of the discrepancy between the estimate and the target density. In contrast, the aim of this work is to consider nonparametric density estimation as a tool for the final purpose of modal clustering, focusing on an appropriate metric comparing the partitions induced by the true and the estimated distribution.

Our main result provides an asymptotic approximation for the considered metric, which allows to introduce new automatic bandwidth selection procedures specifically designed for nonparametric modal clustering. The accuracy of this approximation and the performance of the new methods in practice, with respect to the proposed error criterion, is extensively studied via simulations, and compared with some plausible competitors.

The rest of the paper is structured as follows. Section \ref{sec:background} formally introduces the modal approach to cluster analysis with reference also to algorithmic details. In Section \ref{sec:mainsec} the distance criterion to target density estimation for modal clustering is presented, along with the main asymptotic result and its consequences. Section \ref{sec:numresults} contains the setup and results of the numerical experiments. A generalization to the multidimensional setting is discussed in Section \ref{sec:multidim}. Finally, some concluding remarks are stated in Section \ref{sec:conclusion}.

\section{Background}\label{sec:background}{}

The connection between groups and density features, established by the modal approach to cluster analysis, allows to characterize the concept of ideal population clustering. Informally, a population cluster can be defined as the {domain of attraction} of a mode of the density \cite{stuetzle03}.
An attempt to formalize this concept has been done in \cite{chacon15} with the aid of Morse Theory, a branch of differential topology focusing on the large scale structure of an object via the analysis of the critical points of a function (see e.g. \cite{matsumoto02} for an introduction).

Let us consider a continuous $d$-variate random variable $X$, with probability density function $f\colon\mathbb{R}^d \rightarrow \mathbb{R}$. Assume that $f$ is a Morse function, i.e. a smooth enough function having nondegenerate critical points, and denote by $M_1,\dots, M_r$ the modes of $f$ (i.e. its local maxima).
For a given initial value $x \in \mathbb{R}^d$, an \textit{integral curve} of the negative density gradient $-\nabla f$ is defined as the path $\nu_x : \mathbb{R} \rightarrow \mathbb{R}^d$ such that
\begin{eqnarray*}
\nu'_x(t) = -\nabla f(\nu_x(t)), \hspace{0.5cm} \nu_x(0)=x.
\end{eqnarray*}
The set of points whose integral curve starts at a critical point $x_0$ (as $t\to -\infty$) goes under the name of \textit{unstable manifold} of $x_0$ and is defined as
\begin{eqnarray*}
 W^u_-(x_0) = \{ x \in \mathbb{R}^d : \lim_{t\rightarrow - \infty} \nu_x(t)=x_0  \}.
\end{eqnarray*}
It has been showed \cite{thom49} that the class of the unstable manifolds of every critical point of a Morse function yields a partition of the whole space.
With these notions at hand, the ideal population clustering $\mathcal{C}=\{ \mathcal{C}_1,\dots,\mathcal{C}_r \}$ associated to a density function $f$ is then defined as the set of the unstable manifolds $\{W_{-}^u(M_1),\dots W_{-}^u(M_r) \}$ of the modes of $f$. By borrowing concepts from terrain analysis, the underlying intuition is that, if $f$ is figured as a mountainous landscape where the modes are the peaks, a modal cluster is the region that would be flooded by a fountain emanating from a peak. When $d=1$, clusters are then unequivocally defined by the locations of the minima points of $f$, which represent the cluster boundaries.

Equivalently, if the integral curves associated to the positive density gradient are considered, then a modal cluster is defined as the set of points whose integral curves converge (as $t\to+\infty$) at the same mode. The concept of modal clusters as the domains of attraction of the density modes stems naturally from this definition. Operationally, a numerical algorithm is needed to find the eventual destination of an initial point, and most of the contributions in this direction take their steps from the mean-shift algorithm \cite{fukunaga75}, essentially a variant of the gradient ascent algorithm. The algorithm transforms an initial point $x^{(0)}$ recursively, and identifies a sequence $(x^{(0)},x^{(1)},x^{(2)},\dots)$ according to an updating mechanism defined as
\begin{eqnarray*}
x^{(l+1)}=x^{(l)}+A \frac{\nabla f(x^{(l)})}{f(x^{(l)})} \; ,
\end{eqnarray*}
where $A$ is a $d\times d$ positive definite matrix chosen to guarantee the convergence to a local maximum of $f$. A partition of the data is therefore obtained by simply grouping together the observations climbing to the same density mode, via mean-shift updates.

From a practical point of view the density $f$ is unknown, therefore an estimate is needed. When working in a nonparametric framework a common choice is given by the kernel density estimator. In the following we focus on the univariate case for ease of exposition and mathematical tractability while the multivariate extension will be addressed in Section \ref{sec:multidim} below. Let $X_1,\dots,X_n$ be a sample of i.i.d. realizations of $X$. Then, the kernel density estimator is defined by
\begin{eqnarray*}
\hat{f}_h(x) = \frac{1}{nh}\sum_{i=1}^n K\left( \frac{x-X_i}{h} \right) \; ,
\end{eqnarray*}
where $K$ is the kernel, usually a smooth, non-negative and symmetric function integrating to one, and $h$ is the bandwidth, which controls the smoothness of the density estimate.

While the choice of the function $K$ is known not to have a strong impact in the performance of the estimate \citep[][Section 3.3.2]{Silverman86}, choosing $h$ properly turns out to be crucial. A small value of $h$ leads to an undersmoothed density estimate, with the possible appearence of spurious modes, while a too large value results in an oversmoothed density estimate, possibly hiding relevant features. 

In order to select the smoothing parameter some measure of the distance between the estimated and the true density is needed. A common choice is the \textit{Integrated Squared Error}, defined as
\begin{eqnarray*}
{\rm ISE}(h) = \int_\mathbb{R} \{ \hat{f}_h(x)- f(x) \}^2 dx.
\end{eqnarray*}
Depending on the observed data, the ISE is itself subject to a random variability that could hinder the problem of bandwidth selection (see \cite{HM91}). Hence, its expected value
\begin{equation}\label{eq:MISE}
{\rm MISE}(h)=\mathbb{E} \left[{\rm ISE}(h) \right]
\end{equation}
is alternatively considered as a non-stochastic error distance. The optimal bandwidth $h_{\rm MISE}$ is then defined as $h_{\rm MISE} = \mathop{\rm argmin}_{h>0} {\rm MISE}(h)$.

Since minimization of the MISE does not lead to closed form solutions for the optimal bandwidth, its asymptotic counterpart -- the AMISE -- is often considered.
Both the MISE and the AMISE depend on the true, unknown density function; for this reason several different approaches to estimate them have been proposed. Examples are the ones based on \textit{least squares cross validation}, \textit{biased cross validation} or \textit{plug-in bandwidth selectors}. A comprehensive review of these methods is beyond the scope of this work and, for a complete exposition, readers can refer to \cite{wandjones} or to the more recent book by \cite{chaconduongbook}.

\section{Density estimation for modal clustering}\label{sec:mainsec}
\subsection{Asymptotic bandwidth selection for modal clustering}\label{sec:bwselect}

Bandwidth selectors based on the ISE or akin distances pursue the aim of obtaining an appropriate estimate of the density. However, the goal of modal clustering is markedly different from that of density estimation (see e.g. \cite{cuevasetal}). In fact, two densities that are close with respect to the ISE may result in quite different clusterings while, on the other hand, densities far away from an ISE point of view could lead to the same partition of the space. A graphical illustration of this idea is provided by Figure \ref{fig:fig1}. The inappropriateness of the ISE, or related distances, depends on its focus on the global characteristics of the density, while modal clustering strongly builds on specific and local features, more closely related to the density gradient or the high-density regions (see also \cite{cgw2017}). Therefore, the choice of the amount of smoothing should be tailored specifically for clustering purposes.

\begin{figure}[tb]
\centering
\includegraphics[height=5cm,width=12cm]{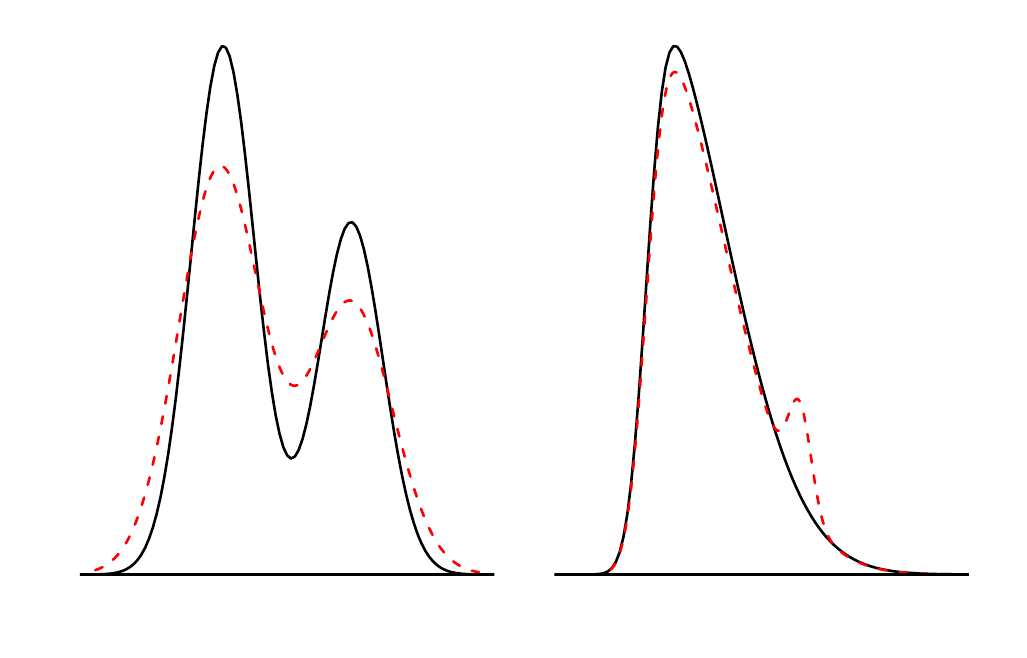}
\caption{Left picture: two quite different densities, from an ISE perspective, inducing the same partition of the space. Right picture: two closer densities having different number of clusters.}
\label{fig:fig1}
\end{figure}

So far, the aim of choosing an amount of smoothing for the specific task of highlighting clustering structures has been scarcely pursued in literature. A related idea, although without particular reference to cluster analysis, has been developed by \cite{samworthwand}, who propose a plug-in type bandwidth selector appropriate for estimation of highest density regions (see also \cite{qiao2018} and \cite{DW2018}). Another related work, more focused on the clustering problem, is the one by \cite{einbeck2011}, where the author suggests to consider the self-coverage measure as a criterion for bandwidth selection. Alternatively, the potential adequacy of a bandwidth selected to properly estimate the density gradient has been pointed out informally by \cite{chaconduong} and explored numerically by \cite{chaconmonfort}. The theoretical motivation of this suggestion lies on the strong dependence of both the population modal clustering and the mean shift updating mechanism on the density gradient. The suggestion in \cite{chen2016} follows the same rationale and the bandwidth is proposed to be selected as a modification of the normal reference rule for density gradient estimation.

To address the problem of bandwidth selection for modal clustering, an appropriate measure of distance should compare the data-based clustering induced by a kernel density estimate with the ideal population one. Stemming from \cite{chacon15}, a natural choice is the \textit{distance in measure}, where the considered measure here is the probability $\mathbb P$ induced by the density $f$. Formally, let $\mathscr{C}=\{C_1,\dots,C_r\}$ and $\mathscr{D}=\{{D}_1,\dots,{D}_s \}$ be two partitions with $r \leq s$ (i.e. possibly different number of groups). The distance in measure between $\mathscr C$ and $\mathscr D$ is defined as
\begin{equation}\label{eq:dm}
d(\mathscr{C},\mathscr{D}) = \frac{1}{2} \min_{\sigma \in \mathcal{P}_s} \left\{ \sum_{i=1}^r \mathbb P(C_i \Delta D_{\sigma(i)}) + \sum_{i=r+1}^s \mathbb P(D_{\sigma(i)}) \right\},
\end{equation}
where $C \Delta D = (C \cap D^c) \cup (C^c \cap D)$ is the symmetric difference between any two sets $C$ and $D$ and $\mathcal{P}_s$ denotes the set of permutations of $\{1,2,\dots,s\}$.
{}
This distance finds an interpretation as the minimal probability mass that would need to be re-labeled to transform one clustering into the other (see Figure \ref{fig:fig2} for a graphical illustration). In this sense, the second term in (\ref{eq:dm}) serves as a penalization for unmatched clusters in one of the clusterings. Practically, this distance conveys the idea that two partitions are similar not when they are physically close, but when the differently-labeled points do not represent a significant portion of the distribution.

\begin{figure}
\centering
\includegraphics[scale=1]{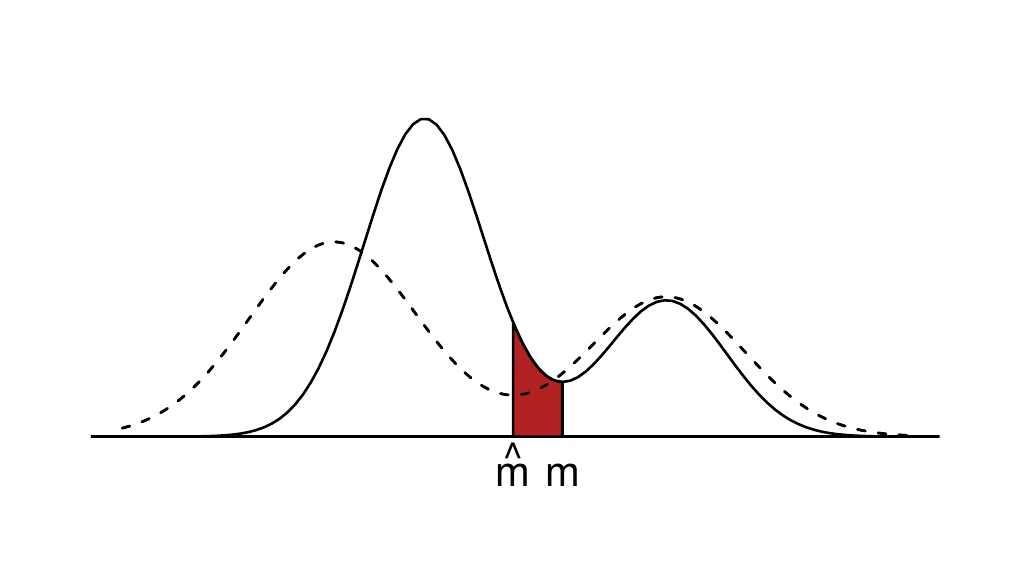}
\caption{Graphical interpretation of the distance in measure: the shaded area represents the probability mass that would need to be re-labeled to transform one induced clustering into the other.}
\label{fig:fig2}
\end{figure}

It should be noted that the choice of this distance to evaluate the performance of a data-based clustering is not arbitrary. Indeed, many other possibilities are described in \cite{meila2016}, but the conclusion of that study is that the distance in measure (called misclassification error there) is ``the distance that comes closest to satifying everyone''. Furthermore, in \cite{vonLuxburg2010} the distance in measure is considered as ``the most convenient choice from a theoretical point of view''.

As with the ISE-MISE duality, the distance in measure is a stochastic error distance, so for the purpose of bandwidth selection it seems more convenient to focus on the \textit{Expected Distance in Measure}
\begin{equation}\label{eq:distmeas}
{\rm EDM}(h)=\mathbb{E}\big[d(\hat{\mathscr C}_h,\mathscr{C}_0) \big],
\end{equation}
where $\hat{\mathscr C}_h$ is the data-based partition induced by $\hat f_h$ and $\mathscr{C}_0$ represents the ideal population clustering. Once the appropriate error distance is defined, the optimal bandwidth $h$ is given by $h_{\rm EDM} = \mathop{\rm argmin}_{h>0} {\rm EDM}(h)$.

As it happened with $h_{\rm MISE}$, it does not seem possible to find an explicit expression for $h_{\rm EDM}$. Hence, our goal will be to obtain an asymptotic form for the EDM that allows to derive a simple approximation to $h_{\rm EDM}$.

To this aim, consider a standard normal random variable $Z$, and denote by $\psi(\mu,\sigma^2)=\mathbb E|\mu+\sigma Z|$ for $\mu\in\mathbb R$ and $\sigma>0$. Since $|\mu+\sigma Z|$ has a folded normal distribution \citep{LNN61}, it follows that $\psi(\mu,\sigma^2)$ can be explicitly expressed as
\begin{eqnarray}\label{eq:psi}
\psi(\mu,\sigma^2)&=&(2/\pi)^{1/2}\sigma e^{-\mu^2/(2\sigma^2)}+\mu\big\{1-2\Phi(-\mu/\sigma)\big\} \\
&=&(2/\pi)^{1/2}\Big\{\sigma e^{-\mu^2/(2\sigma^2)}+|\mu|\int_0^{|\mu|/\sigma}e^{-z^2/2}dz\Big\}  \; , \nonumber
\end{eqnarray}
where $\Phi$ denotes the distribution function of $Z$. This function $\psi$ plays a key role in the asymptotic behavior of the expected distance in measure, as the next result shows (see Appendix \ref{proofs} for a proof).

\begin{theorem}\label{thm:AEDM}
	Assume that $f$ is a bounded Morse function with $r\geq2$ modes and local minima $m_1<\dots<m_{r-1}$, three-times continuously differentiable around each $m_j$, that $\int_{-\infty}^\infty|x|f(x)dx<\infty$, and that the kernel $K$ is supported on $(-1,1)$, has four bounded derivatives and satisfies $\int_{-\infty}^\infty K(x)dx=1$, $\int_{-\infty}^\infty xK(x)dx=0$ and $\mu_2(K)=\int_{-\infty}^\infty x^2K(x)dx<\infty$. Define $R(K^{(1)})=\int_{-\infty}^{\infty} K^{(1)}(x)^2 dx$ and suppose also that $h\equiv h_n$ is such that $h\to0$, $nh^5/\log n\to\infty$ and $(nh^7)^{-1}$ is bounded. Then, ${\rm EDM}(h)$ is asymptotically equivalent to
\begin{equation}\label{eq:AEDM}
\hspace{-.3cm}{\rm AEDM}(h)=\sum_{j=1}^{r-1}\frac{f(m_j)}{f^{(2)}(m_j)}\psi\Big(\tfrac{1}{2}\mu_2(K)f^{(3)}(m_j)h^2,R(K^{(1)})f(m_j)(nh^3)^{-1}\Big),
\end{equation}
where $g^{(k)}$ refers to the $k$-th derivative of a function $g(\cdot)$.
\end{theorem}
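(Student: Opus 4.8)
\emph{Proof strategy.} The idea is that on the real line a density-based partition is entirely determined by its cut points, so the problem reduces to the asymptotics of the estimated local minima of $\hat{f}_h$. \emph{Step 1 (reduction of the distance).} I would first show that, under the stated conditions, with probability $1-o(h^2+(nh^3)^{-1/2})$ the estimate $\hat{f}_h$ is a Morse function with exactly $r$ modes and local minima $\hat m_1<\dots<\hat m_{r-1}$, each $\hat m_j$ in a shrinking neighbourhood of $m_j$; this follows from uniform consistency rates for kernel derivative estimators on compacta (here $nh^5/\log n\to\infty$ is used) together with a tail argument based on $\int_{-\infty}^\infty|x|f(x)\,dx<\infty$, the compact support of $K$ and the boundedness of $f$, to rule out spurious modes far from the critical region. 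On this event the minimising permutation in \eqref{eq:dm} is the order-preserving one, and computing the symmetric differences of consecutive intervals gives
\[
d(\hat{\mathscr C}_h,\mathscr C_0)=\sum_{j=1}^{r-1}\Big|\int_{m_j}^{\hat m_j}f(x)\,dx\Big|=\sum_{j=1}^{r-1}f(m_j)\,|\hat m_j-m_j|\,\big(1+O(|\hat m_j-m_j|^2)\big);
\]
off this event $d$ is bounded by a sum of probabilities and is therefore negligible, so ${\rm EDM}(h)=\sum_{j=1}^{r-1}f(m_j)\,\mathbb E|\hat m_j-m_j|\,(1+o(1))$.

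\emph{Step 2 (linearisation and the CLT for $\hat{f}_h^{(1)}(m_j)$).} Writing $0=\hat{f}_h^{(1)}(\hat m_j)=\hat{f}_h^{(1)}(m_j)+\hat{f}_h^{(2)}(\xi_j)(\hat m_j-m_j)$ for an intermediate $\xi_j$, using $f^{(1)}(m_j)=0$ and $\hat{f}_h^{(2)}(\xi_j)\to_P f^{(2)}(m_j)>0$, one gets $\hat m_j-m_j=-\hat{f}_h^{(1)}(m_j)/f^{(2)}(m_j)+R_{j,n}$, where $R_{j,n}$ collects the error from replacing $\hat{f}_h^{(2)}(\xi_j)$ by $f^{(2)}(m_j)$ and the quadratic Taylor term $\tfrac12\hat{f}_h^{(3)}(\cdot)(\hat m_j-m_j)^2$. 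The three-times differentiability of $f$ around $m_j$ and the boundedness of $(nh^7)^{-1}$ --- which keeps $\hat{f}_h^{(3)}$ stochastically bounded near $m_j$ --- are precisely what guarantees $\mathbb E|R_{j,n}|=o(h^2+(nh^3)^{-1/2})$. I would then use the standard expansions $\mathbb E[\hat{f}_h^{(1)}(m_j)]=\tfrac12\mu_2(K)f^{(3)}(m_j)h^2+o(h^2)=:b_{j,n}+o(h^2)$ and ${\rm Var}(\hat{f}_h^{(1)}(m_j))=R(K^{(1)})f(m_j)(nh^3)^{-1}(1+o(1))=:\sigma_{j,n}^2(1+o(1))$, and establish $(\hat{f}_h^{(1)}(m_j)-b_{j,n})/\sigma_{j,n}\Rightarrow Z$ via a Lyapunov argument for the triangular array $(nh^2)^{-1}\sum_{i=1}^n K^{(1)}((m_j-X_i)/h)$; since $K$ has compact support and the $m_j$ are distinct, for small $h$ no observation contributes to two of these quantities, so they are moreover asymptotically independent.

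\emph{Step 3 (from distribution to moment, and assembly).} Since $f^{(2)}(m_j)>0$, $\,|\hat m_j-m_j|=|\hat{f}_h^{(1)}(m_j)|/f^{(2)}(m_j)+O(|R_{j,n}|)$; a uniform-integrability bound on the normalised stochastic part (for instance through a Rosenthal-type moment inequality, again using $nh^5/\log n\to\infty$) then permits passing from the CLT to the first absolute moment, yielding
\[
\mathbb E|\hat m_j-m_j|=\frac{1}{f^{(2)}(m_j)}\,\mathbb E\big|b_{j,n}+\sigma_{j,n}Z\big|+o\big(h^2+(nh^3)^{-1/2}\big)=\frac{\psi(b_{j,n},\sigma_{j,n}^2)}{f^{(2)}(m_j)}\,\big(1+o(1)\big),
\]
using the definition of $\psi$ and $\psi(b_{j,n},\sigma_{j,n}^2)\asymp h^2+(nh^3)^{-1/2}$. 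Multiplying by $f(m_j)$, summing over $j=1,\dots,r-1$ and substituting $b_{j,n}$ and $\sigma_{j,n}^2$ reproduces exactly ${\rm AEDM}(h)$ of \eqref{eq:AEDM}. I expect the main obstacle to be this last step together with the negligibility claims above: to recover the \emph{exact} folded-normal functional $\psi$ rather than merely its order $h^2+(nh^3)^{-1/2}$, one must show that the wrong-modal-structure event, the Taylor remainders $R_{j,n}$, and the tails of the stochastic part are each $o(h^2+(nh^3)^{-1/2})$ uniformly over the admissible bandwidths, so that the bias--variance trade-off encoded in $\psi$ is preserved after taking expectations --- which is exactly why the hypotheses couple the local smoothness of $f$ at the minima with $nh^5/\log n\to\infty$ and $(nh^7)^{-1}=O(1)$.
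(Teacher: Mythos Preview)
Your proposal is correct and follows the same conceptual route as the paper: first reduce the distance in measure on the line to $\sum_{j}|F(\hat m_j)-F(m_j)|$, then replace this by $\sum_j f(m_j)\,\mathbb E|\hat m_j-m_j|$ via a Taylor expansion at the minima (your observation that $f'(m_j)=0$ kills the linear correction is the same as the paper's use of the integral-remainder form with $f$ bounded and continuous), and finally identify $\mathbb E|\hat m_j-m_j|$ with $\psi(b_{j,n},\sigma_{j,n}^2)/f^{(2)}(m_j)$.

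The difference is one of packaging rather than of method. The paper's proof is very short because it outsources the two substantive pieces: the fact that $\hat f_h$ eventually has exactly $r-1$ local minima is quoted from Theorem~4.1 of \cite{chacon15}, and the asymptotic form of $\mathbb E|\hat m_j-m_j|$ --- precisely your Steps~2 and~3, including the folded-normal limit and the passage from convergence in law to the first absolute moment --- is taken verbatim from Equation~(2.6) of Grund and Hall~\cite{GH95}. Your version instead re-derives both pieces inline (uniform-consistency/tail control for the modal structure; Taylor linearisation, Lyapunov CLT and a uniform-integrability argument for the moment), and in doing so makes transparent exactly where each hypothesis --- $nh^5/\log n\to\infty$, $(nh^7)^{-1}=O(1)$, three derivatives of $f$ at the $m_j$, compact support and smoothness of $K$ --- is consumed. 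Your care in demanding that the wrong-structure event have probability $o(h^2+(nh^3)^{-1/2})$, rather than merely ``probability one for $n$ large'', is arguably more rigorous than the paper's almost-sure statement, since $n_0$ there is random and one still needs a bound on the complementary contribution to the expectation. The cost is length; the gain is self-containment.
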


The asymptotically optimal bandwidth $h_{\rm AEDM}$ is then defined as the value of $h>0$ that minimizes ${\rm AEDM}(h)$. Due to the structure of $\psi(\cdot,1)$, minimization of (\ref{eq:AEDM}) is closely related to the problem of minimizing the $L_1$ distance in kernel density estimation and, in fact, reasoning as in \cite{HW88} it is possible to show that $h_{\rm AEDM}$ is of order $n^{-1/7}$. Unfortunately, as it happened with $h_{\rm EDM}$, it seems that neither $h_{\rm AEDM}$ admits an explicit representation hence, to get further insight into the problem of optimal bandwidth selection for density clustering, it appears necessary to rely on a tight upper bound for ${\rm AEDM}(h)$.

To find such a bound it is useful to note that many properties of $\psi(u,1)$ are given in \cite[Ch. 5]{DG85}, and can be translated to our function of interest by taking into account that $\psi(\mu,\sigma^2)=\sigma\psi(\mu/\sigma,1)$. It follows that $\psi(\mu,\sigma^2)$ is symmetric with respect to $\mu$, nondecreasing for $\mu>0$ and convex, attaining its minimum at $\mu=0$ so that $\psi(\mu,\sigma^2)\geq\psi(0,\sigma^2)=(2/\pi)^{1/2}\sigma$ for all $\mu\in\mathbb R,\sigma>0$.

By taking into account that $e^{-\mu^2/(2\sigma^2)}$ and $1-2\Phi(-\mu/\sigma)$ are both bounded by 1, \cite{DG85} also noted that
\begin{equation}\label{eq:bound1}
\psi(\mu,\sigma^2)\leq(2/\pi)^{1/2}\sigma+|\mu|
\end{equation}
for all $\mu\in\mathbb R,\sigma>0$. However, a tighter bound for small values of $\mu$ is given in the next lemma.

\begin{lemma}\label{lem:bound2}
The bound $\psi(\mu,\sigma^2)\leq(2/\pi)^{1/2}\sigma+(2\pi)^{-1/2}\mu^2/\sigma$ holds for all $\mu\in\mathbb R$ and $\sigma>0$.
\end{lemma}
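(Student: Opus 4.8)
The plan is to first reduce to the standardized case $\sigma=1$ by invoking the scaling identity $\psi(\mu,\sigma^2)=\sigma\,\psi(\mu/\sigma,1)$ recorded just before the lemma: dividing both sides of the asserted inequality by $\sigma$ shows it is equivalent to $\psi(\mu/\sigma,1)\le(2/\pi)^{1/2}+(2\pi)^{-1/2}(\mu/\sigma)^2$, so it suffices to prove
\begin{equation*}
\psi(u,1)\le(2/\pi)^{1/2}+(2\pi)^{-1/2}u^2\qquad\text{for all }u\in\mathbb R.
\end{equation*}
Since $\psi(\cdot,1)$ is even and the right-hand side is even, I would then restrict attention to $u\ge0$.

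Next, using the second representation in (\ref{eq:psi}) with $\sigma=1$, namely $\psi(u,1)=(2/\pi)^{1/2}\bigl\{e^{-u^2/2}+u\int_0^u e^{-z^2/2}dz\bigr\}$ for $u\ge0$, and multiplying the target inequality through by $(\pi/2)^{1/2}$ (noting $(\pi/2)^{1/2}(2\pi)^{-1/2}=\tfrac12$), the claim becomes $g(u)\ge0$ on $[0,\infty)$, where
\begin{equation*}
g(u)=1+\tfrac12 u^2-e^{-u^2/2}-u\int_0^u e^{-z^2/2}\,dz .
\end{equation*}
One checks immediately that $g(0)=0$.

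The key step is the derivative computation: differentiating term by term, with the product rule and the fundamental theorem of calculus applied to $u\int_0^u e^{-z^2/2}dz$, the two copies of $u\,e^{-u^2/2}$ cancel and one is left with
\begin{equation*}
g'(u)=u-\int_0^u e^{-z^2/2}\,dz=\int_0^u\bigl(1-e^{-z^2/2}\bigr)\,dz\ge0 ,
\end{equation*}
the inequality being clear because $e^{-z^2/2}\le1$. Hence $g$ is nondecreasing on $[0,\infty)$, and since $g(0)=0$ we conclude $g(u)\ge0$ for all $u\ge0$, which gives the lemma.

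There is no genuine obstacle in this argument; the only point deserving a moment's care is the differentiation of the product $u\int_0^u e^{-z^2/2}dz$ and checking the cancellation that produces the clean form of $g'$. It is worth observing that, combined with (\ref{eq:bound1}), this yields $\psi(\mu,\sigma^2)\le(2/\pi)^{1/2}\sigma+\min\{|\mu|,(2\pi)^{-1/2}\mu^2/\sigma\}$, the quadratic term being the sharper of the two exactly when $|\mu|\le\sqrt{2\pi}\,\sigma$, which is the regime relevant for the bias term in (\ref{eq:AEDM}) as $h\to0$.
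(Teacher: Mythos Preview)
Your proof is correct and essentially identical to the paper's: both reduce to $\sigma=1$ via the scaling identity, then show the equivalent one-variable inequality by differentiating the auxiliary function (your $g(u)=1-\alpha(u)$ in the paper's notation) and observing that $g'(u)=u-\int_0^u e^{-z^2/2}\,dz\ge0$ with $g(0)=0$. The closing remark about the crossover at $|\mu|=\sqrt{2\pi}\,\sigma$ is also exactly what the paper notes right after the lemma.
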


The bound in Lemma \ref{lem:bound2} is tighter than (\ref{eq:bound1}) whenever $|\mu|\leq(2\pi)^{1/2}\sigma$, but the situation reverses for bigger values of $|\mu|$, so that none of the two bounds is uniformly better (see Figure \ref{fig:psi-bounds}) hence we should keep track of both of them. They lead to upper bounds for the asymptotic EDM.

\begin{figure}\centering
\includegraphics[width=.5\textwidth]{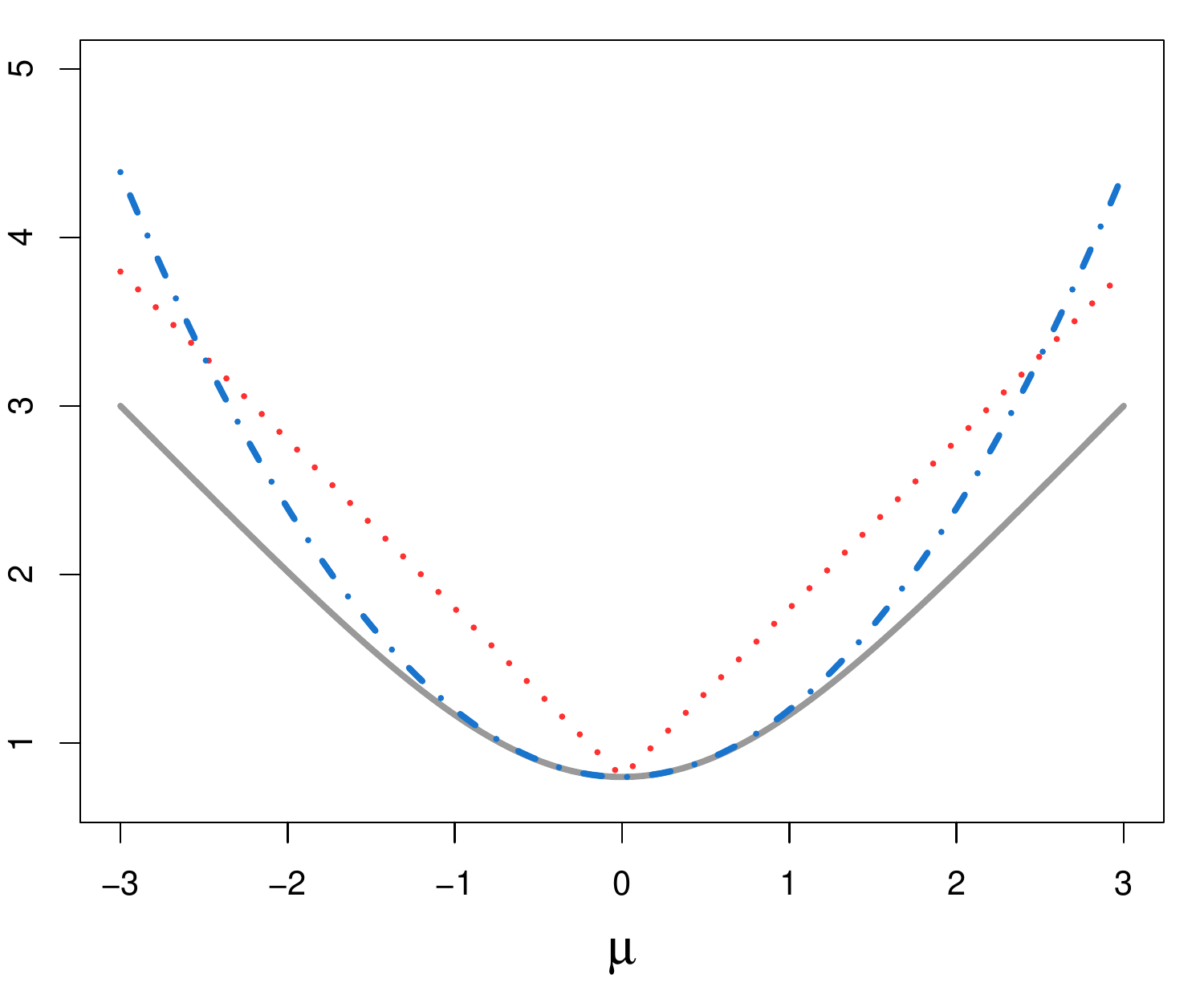}
\caption{Graph of $\psi(\mu,1)$ as a function of $\mu$ (grey solid curve), together the bound (\ref{eq:bound1}) (red dotted line) and the bound from Lemma \ref{lem:bound2} (blue dot-dashed curve).}
\label{fig:psi-bounds}
\end{figure}

\begin{corollary}\label{cor:1}
Under the conditions of Theorem \ref{thm:AEDM}, the asymptotic EDM satisfies ${\rm AEDM}(h)\leq\min\{{\rm AB}1(h),{\rm AB}2(h)\}$ for all $h>0$, where
\begin{align*}
{\rm AB}1(h)&=(2/\pi)^{1/2}R(K^{(1)})^{1/2}bn^{-1/2}h^{-3/2}+\tfrac12\mu_2(K)a_1h^2,\\
{\rm AB}2(h)&=(2/\pi)^{1/2}R(K^{(1)})^{1/2}bn^{-1/2}h^{-3/2}+\\ &\hspace{.5cm}+(32\pi)^{-1/2}\mu_2(K)^2R(K^{(1)})^{-1/2}a_2n^{1/2}h^{11/2}.
\end{align*}
Here, $b=\sum_{j=1}^{r-1}b_{j}$ and $a_\ell=\sum_{j=1}^{r-1}a_{j\ell}$ and for $\ell=1,2$, where
\begin{align*}
a_{j1}&=f(m_j)|f^{(3)}(m_j)|/f^{(2)}(m_j),&b_{j}=f(m_j)^{3/2}/f^{(2)}(m_j),\\
a_{j2}&=f(m_j)^{1/2}f^{(3)}(m_j)^2/f^{(2)}(m_j).
\end{align*}
The minimizers of ${\rm AB}1(h)$ and ${\rm AB}2(h)$ can be computed explicitly, and are given by
\begin{align}\label{eq:hbound1}
      	h_{\rm AB1} &= \left( \frac{9R(K^{(1)})b^2} {2\pi\mu_2(K)^2a_1^2} \right)^{1/7}n^{-1/7} \\\label{eq:hbound2}
		h_{\rm AB2} &=  \left( \frac{24R(K^{(1)}) b }{11\mu_2(K)^2 a_2} \right)^{1/7} n^{-1/7} \; .
    	\end{align}
\end{corollary}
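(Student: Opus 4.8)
The plan is to get the two upper bounds ${\rm AB}1$ and ${\rm AB}2$ by applying, term by term in the sum defining ${\rm AEDM}(h)$ in (\ref{eq:AEDM}), the two elementary inequalities for $\psi$ that are already available — namely (\ref{eq:bound1}) for ${\rm AB}1$ and Lemma \ref{lem:bound2} for ${\rm AB}2$ — and then, since both inequalities hold for every $h>0$, taking the pointwise minimum. Throughout I would use that each $m_j$ is a nondegenerate local minimum of the Morse function $f$, so that $f^{(2)}(m_j)>0$, together with $f(m_j)>0$; hence the coefficient $f(m_j)/f^{(2)}(m_j)$ multiplying each $\psi$ term is positive and finite, and the quantities $a_{j1},a_{j2},b_j$ are all well defined and positive, so that the aggregated constants $a_1,a_2,b$ make sense.

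For ${\rm AB}1$, write $\mu_j=\tfrac12\mu_2(K)f^{(3)}(m_j)h^2$ and $\sigma_j=\big(R(K^{(1)})f(m_j)\big)^{1/2}(nh^3)^{-1/2}$ for the two arguments of the $j$-th term, apply $\psi(\mu_j,\sigma_j^2)\le(2/\pi)^{1/2}\sigma_j+|\mu_j|$, multiply by $f(m_j)/f^{(2)}(m_j)$, and sum over $j$. The $\sigma_j$ parts aggregate into $(2/\pi)^{1/2}R(K^{(1)})^{1/2}n^{-1/2}h^{-3/2}\sum_j f(m_j)^{3/2}/f^{(2)}(m_j)=(2/\pi)^{1/2}R(K^{(1)})^{1/2}b\,n^{-1/2}h^{-3/2}$, while the $|\mu_j|$ parts aggregate into $\tfrac12\mu_2(K)h^2\sum_j f(m_j)|f^{(3)}(m_j)|/f^{(2)}(m_j)=\tfrac12\mu_2(K)a_1h^2$; this is exactly ${\rm AB}1(h)$. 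For ${\rm AB}2$, repeat the same substitution but invoke Lemma \ref{lem:bound2}, $\psi(\mu_j,\sigma_j^2)\le(2/\pi)^{1/2}\sigma_j+(2\pi)^{-1/2}\mu_j^2/\sigma_j$. The first summand reproduces the same $n^{-1/2}h^{-3/2}$ term as above; in the second, $\mu_j^2/\sigma_j=\tfrac14\mu_2(K)^2f^{(3)}(m_j)^2R(K^{(1)})^{-1/2}f(m_j)^{-1/2}n^{1/2}h^{11/2}$, so after multiplying by $(2\pi)^{-1/2}f(m_j)/f^{(2)}(m_j)$ and summing one obtains $(32\pi)^{-1/2}\mu_2(K)^2R(K^{(1)})^{-1/2}a_2\,n^{1/2}h^{11/2}$, using $(2\pi)^{-1/2}/4=(32\pi)^{-1/2}$. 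Adding the common first term gives ${\rm AB}2(h)$, and combining the two bounds yields ${\rm AEDM}(h)\le\min\{{\rm AB}1(h),{\rm AB}2(h)\}$.

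For the explicit minimizers, note that each bound has the shape $g(h)=\alpha h^{-3/2}+\beta h^p$ on $(0,\infty)$ with $\alpha,\beta>0$: here $(\alpha,\beta,p)=\big((2/\pi)^{1/2}R(K^{(1)})^{1/2}b\,n^{-1/2},\ \tfrac12\mu_2(K)a_1,\ 2\big)$ for ${\rm AB}1$ and $(\alpha,\beta,p)=\big((2/\pi)^{1/2}R(K^{(1)})^{1/2}b\,n^{-1/2},\ (32\pi)^{-1/2}\mu_2(K)^2R(K^{(1)})^{-1/2}a_2\,n^{1/2},\ 11/2\big)$ for ${\rm AB}2$. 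Such a $g$ is strictly convex and tends to $+\infty$ as $h\to0^+$ and as $h\to\infty$, so it has a unique minimizer, determined by $g'(h)=0$, i.e. $h^{p+3/2}=3\alpha/(2p\beta)$, giving $h=\big(3\alpha/(2p\beta)\big)^{2/(2p+3)}$. For ${\rm AB}1$ this reads $h_{\rm AB1}=\big(3\alpha/(4\beta)\big)^{2/7}$ and for ${\rm AB}2$, $h_{\rm AB2}=\big(3\alpha/(11\beta)\big)^{1/7}$; substituting the values of $\alpha,\beta$ and simplifying the numerical constants — in particular $(2/\pi)^{1/2}(32\pi)^{1/2}=8$ in the ${\rm AB}2$ case — produces exactly (\ref{eq:hbound1}) and (\ref{eq:hbound2}), and incidentally shows that both are of order $n^{-1/7}$. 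There is no genuine obstacle in this corollary: everything reduces to Lemma \ref{lem:bound2} and Theorem \ref{thm:AEDM}, and the only point requiring care is the bookkeeping of the powers of $h$ and $n$ and of the numerical constants in the aggregated bounds and in their minimizers.
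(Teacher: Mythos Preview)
Your proposal is correct and follows exactly the route the paper intends: the corollary is stated without a separate proof precisely because it is obtained by applying the two pointwise bounds for $\psi$ (inequality (\ref{eq:bound1}) and Lemma \ref{lem:bound2}) term by term to the sum (\ref{eq:AEDM}), and then minimizing each resulting expression of the form $\alpha h^{-3/2}+\beta h^{p}$ by elementary calculus. Your bookkeeping of the constants and powers (including $(2\pi)^{-1/2}/4=(32\pi)^{-1/2}$ and $(2/\pi)^{1/2}(32\pi)^{1/2}=8$) is accurate, and the only caveat is that $a_{j1},a_{j2}$ are merely nonnegative (they vanish when $f^{(3)}(m_j)=0$, cf.\ Remark~4), so the explicit minimizers (\ref{eq:hbound1})--(\ref{eq:hbound2}) require $a_1,a_2>0$.
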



\subsection{Some remarks}\label{sec:discussion}
In this section we discuss in more depth some of the results derived in Section \ref{sec:bwselect}. The aim is to provide insights on the behavior of the approximations and bandwidth selectors and to discuss possible competitors.

\begin{remark}\label{rem:r1}
Theorem \ref{thm:AEDM} provides an asymptotic expression for the EDM that is valid as long as the true density has two or more modes. When the true density is unimodal ($r=1$), expression \eqref{eq:AEDM} is not well-defined. However, under the assumptions of the theorem the kernel estimator is also unimodal with probability one for big enough $n$. Thus, asymptotically the distance in measure would be identically zero, hence the AEDM formula would remain valid under the usual convention setting $\sum_{j=1}^0=0$.

Moreover, for unimodal densities the numerical work in Section \ref{sec:numresults} suggests that there exists $h_0>0$ such that ${\rm EDM}(h)=0$ for all $h\geq h_0$. Hence, in that case it seems sensible to define $h_{\rm EDM}=\inf\{h>0\colon {\rm EDM}(h)=0\}$.


\end{remark}

\begin{remark}
A natural estimator of the density first derivative is the first derivative of the kernel density estimator. For this estimator it is possible to define the MISE as in (\ref{eq:MISE}), and to consider its minimizer $h_{\rm MISE,1}$ and its asymptotic approximation $h_{\rm AMISE,1}$ (see \cite{singh87} and \cite{chacon_etal11}). The bandwidths (\ref{eq:hbound1}) and (\ref{eq:hbound2}) share the same order as $h_{\rm AMISE,1}$, whose expression is given by
	\begin{equation}\label{eq:hgrad}
 		h_{\rm AMISE,1}=\left( \frac{3R(K^{(1)})}{\mu_2(K)^2R(f^{(3)})} \right)^{1/7} n^{-1/7},
 	\end{equation}
 with $R(f^{(3)})=\int_{-\infty}^\infty f^{(3)}(x)^2dx$.
 	This consideration strengthens the intuition, outlined in Section \ref{sec:bwselect}, that (\ref{eq:hgrad}) could be an adequate bandwidth choice for modal clustering purposes.

\end{remark}

\begin{remark}
By explicitly plugging expression (\ref{eq:psi}) for $\psi$ into (\ref{eq:AEDM}), it is easily seen that the AEDM can be decomposed into two summands. Studying their behavior, as a function of $h$, it can be checked that the first term decreases when $h\to 0$ while the second one tends to increase, and viceversa for $h$ taking large values. A similar trade-off occurs with the decomposition of the AMISE into the \textit{Asymptotic Integrated Squared Bias} and the \textit{Asymptotic Integrated Variance}, which are minimized for diverging values of $h$.
\end{remark}

\begin{remark}
If the true density is exactly symmetric around its minimum, the considerations in the previous item do not hold anymore. Symmetry around a minimum $m$ implies $f^{(k)}(m)=0$, for any odd value of $k$. Therefore the first summand of the AEDM expression, related to the bias, vanishes, leading to a monotonically decreasing behavior of the AEDM itself. A similar anomaly was observed in the related problem of mode estimation in \cite{chernoff1964}: if the true density is symmetric around its mode, then Chernoff's mode estimator is unbiased. Hence, in some special cases symmetry plays a certain role in the performance of these smoothing methodologies.
\end{remark}

\begin{remark}
The derived bandwidths depend on some unknown quantities such as the true density, its local minima and its second and third derivatives. In order to be of practical use we shall resort to plug-in strategies, that is, data-based bandwidth selectors will be proposed in the next section by substituting the aforementioned unknown quantities with pilot estimates. This is the same procedure that is commonly adopted when considering the plug-in bandwidth selector $\hat h_{\rm PI,1}$ for density gradient estimation (see \cite{jones1992} and \cite{chaconduong}).

It should be noted that this allows, from a practical point of view, to circumvent the issue about the perfect symmetry around a minimum since, by using a nonparametric pilot estimate of the third derivative, it is highly unlikely to encounter a similar situation in practice.

\end{remark}

\begin{figure}[t!]\centering
\hspace{-.45cm}	\includegraphics[scale=0.3]{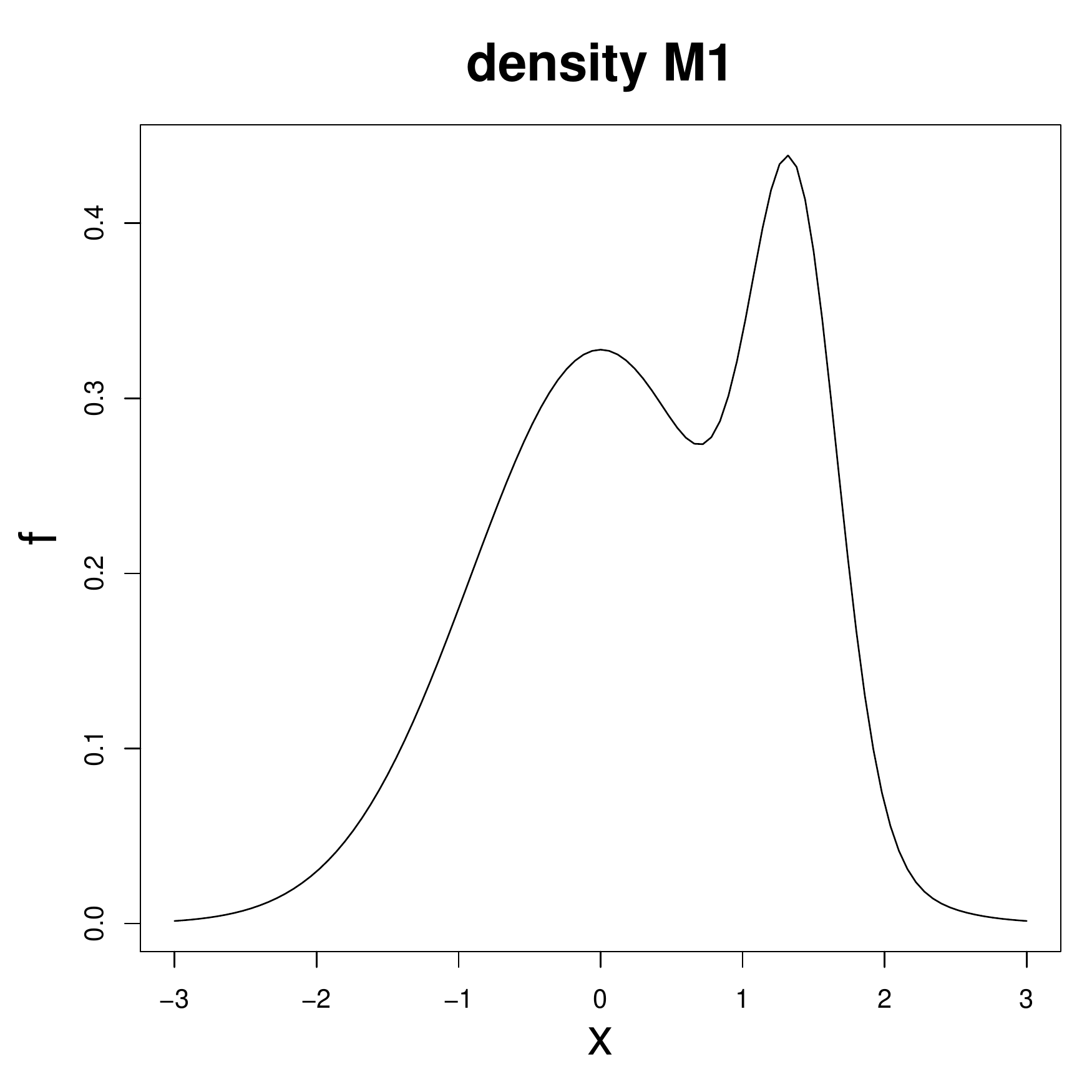}\hspace{-.25cm}
	\includegraphics[scale=0.3]{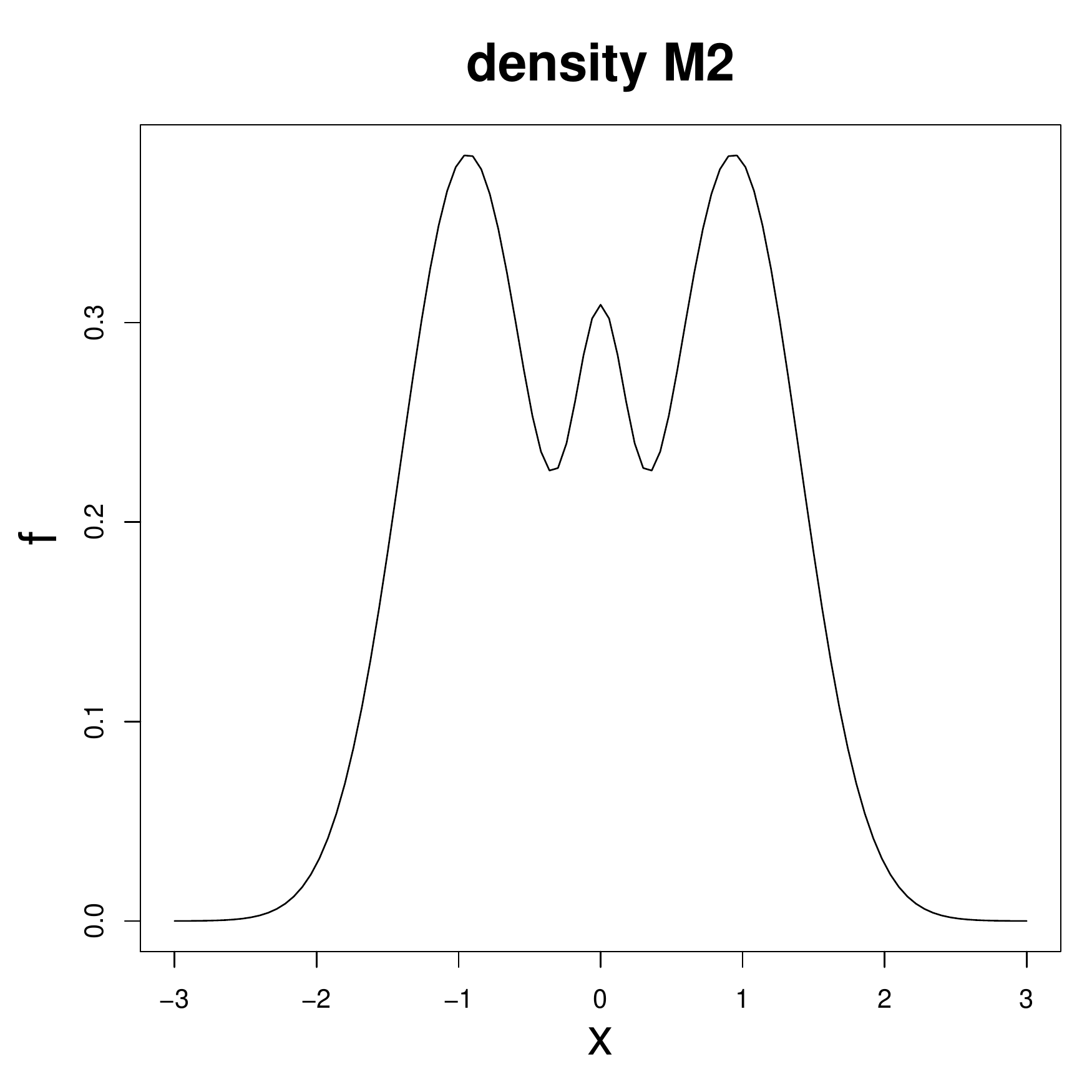}\hspace{-.25cm}
	\includegraphics[scale=0.3]{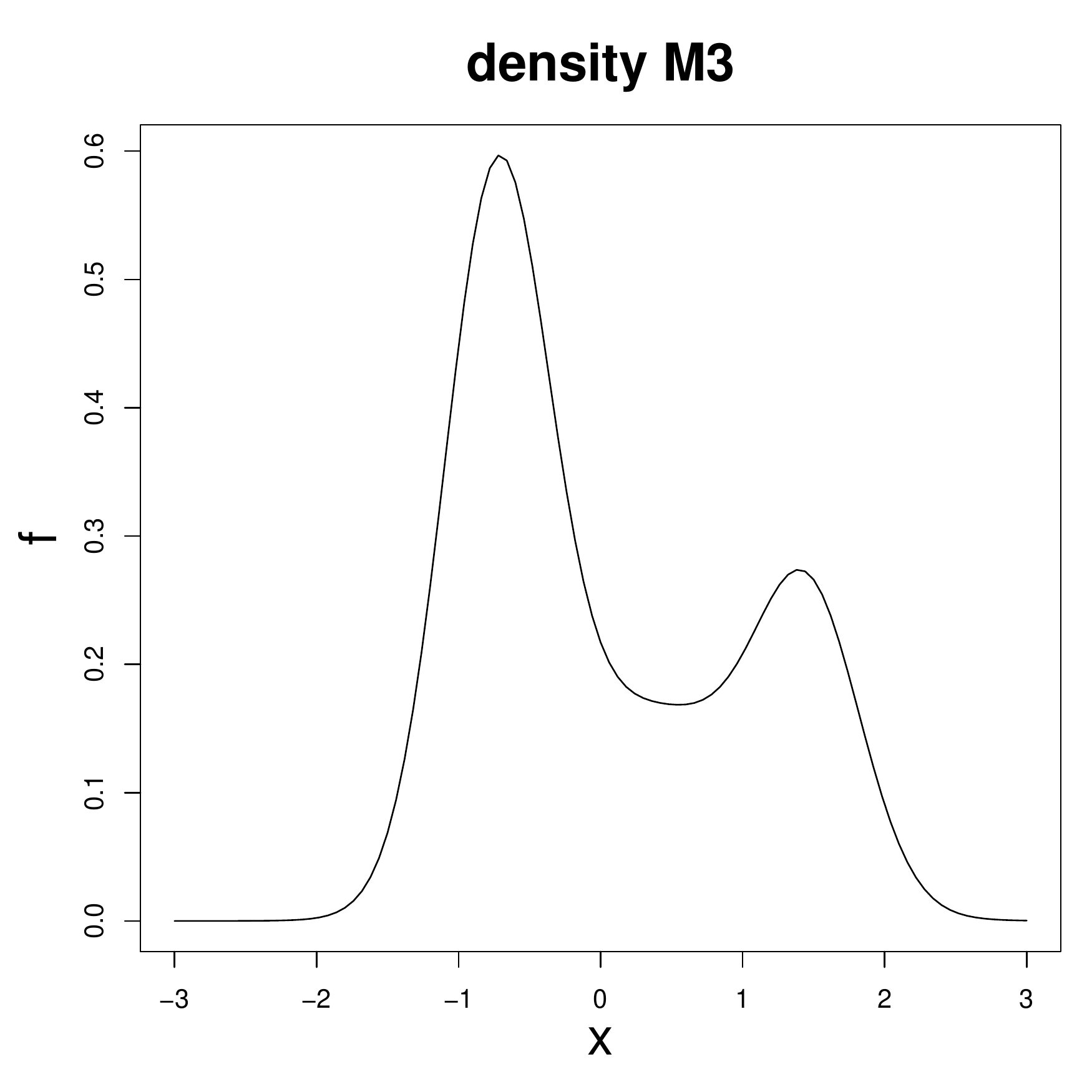}\\
\hspace{-4.35cm}\includegraphics[scale=0.3]{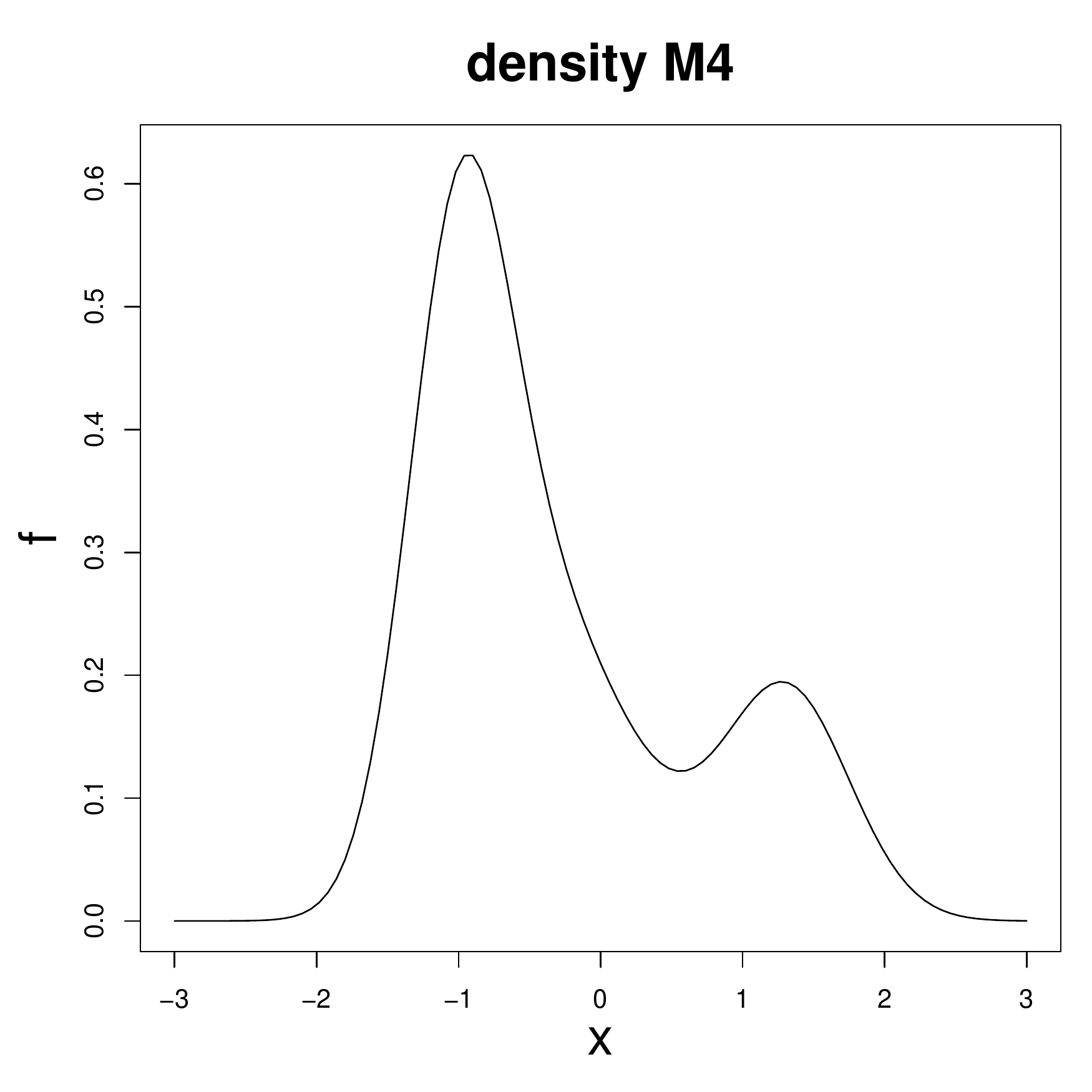}\hspace{-.25cm}
\includegraphics[scale=0.3]{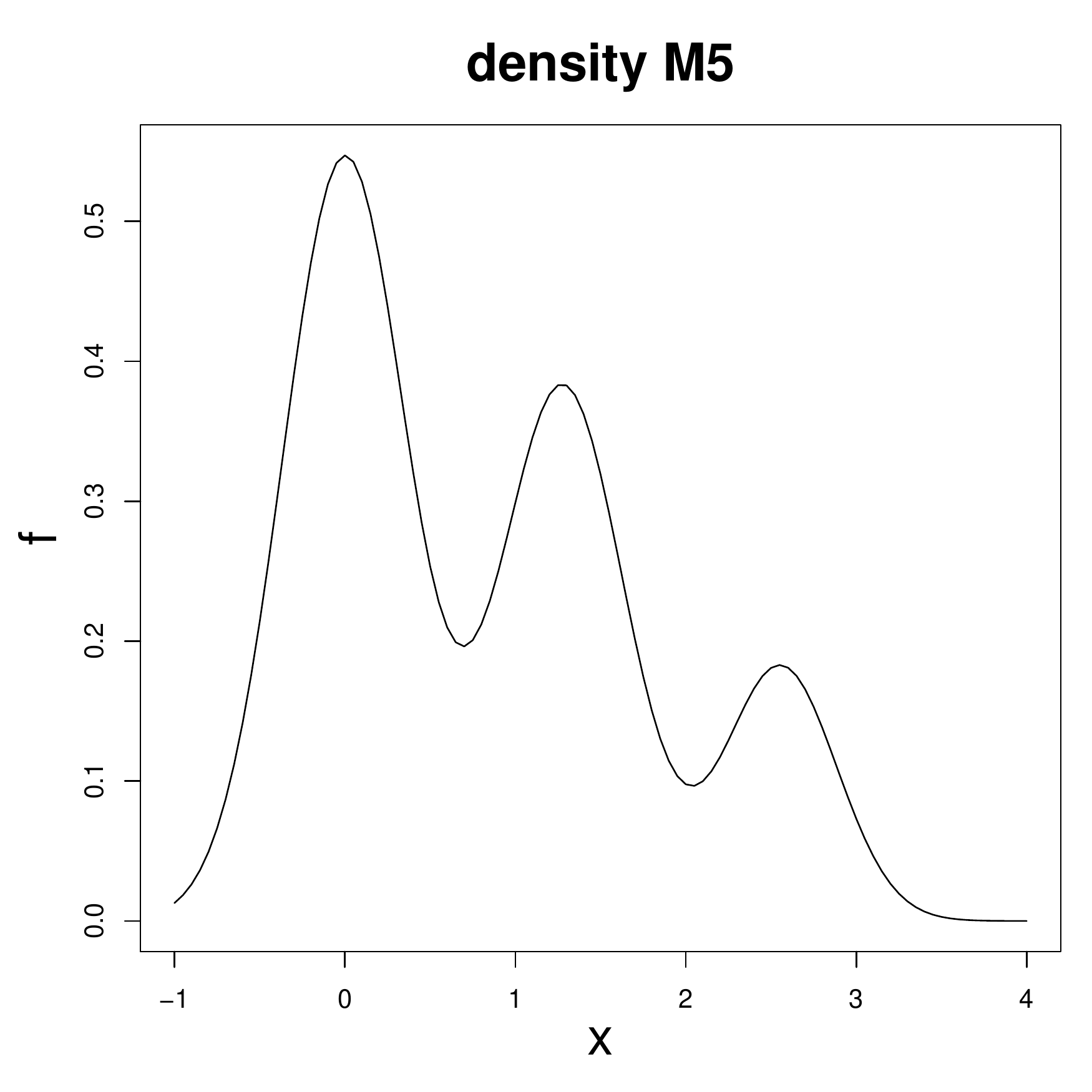}
\caption{Univariate density functions selected for simulations.}\label{fig:densities1}
\end{figure}

\section{Numerical results}\label{sec:numresults}

The idea of estimating the density for clustering purposes, via the minimization of the expected distance in measure -- or its asymptotic counterpart -- is explored in this section via simulations. 
All the analyses have been performed in the R environment \cite{Rsoftware} with the aid of the \texttt{ks}  \cite{kspackage}, \texttt{meanShiftR} \cite{meanshiftr}, \texttt{clue} \cite{clue}, and \texttt{multimode} \cite{multimode} packages.

A total of $B=1000$ samples for each of the sizes $n \in \{100, 1000, 10000\}$ are generated from the univariate densities depicted in Figure \ref{fig:densities1} and whose parameters are reported in Appendix \ref{App:settings}. The selected densities are designed to illustrate different modal structures to encompass different possible behaviors from a clustering perspective.

The first goal of the study was to evaluate the quality of the asymptotic approximation of the EDM and the behavior of the two bounds derived in Corollary \ref{cor:1}. Since an explicit expression for the EDM was not available, we obtained a Monte Carlo approximation based on the $B=1000$ synthetic samples.

The plots displayed in Tables \ref{tab:density8} to \ref{tab:trimodal} show the behavior of the asymptotic approximations, with respect to the EDM, as a function of the bandwidth $h$. As expected, the approximations improve as the sample size increases. The two bounds show a quite different behavior, with characteristics that reflect the theoretical properties pointed out in Section \ref{sec:discussion}. The first bound is closer to the AEDM in uniform terms, but despite having a diverging behavior for large $h$ the second bound is usually closer to the AEDM around the location of the minimizer $h_{\rm AEDM}$.

With regard to the EDM, it presents a nearly flat pattern around its minimizer, thus suggesting a range of plausible bandwidths with very similar performance as the optimal one. This is especially true for densities with a simpler modal structure, captured by the kernel estimate for a wide range of bandwidth values, which suggests that bandwidth selection for modal clustering is, in fact, easier than for density estimation.

\begin{table}[t!]
\caption{Top panel: the EDM (solid line), the AEDM (dashed grey line), and the bounds
AB1 (dotted line) and AB2 (dot-dashed line) versus $h$, for $n =100, 1000, 10000$. All the expressions are evaluated by assuming $f$ and all the involved quantities known. The minimum EDM is reported below the plots, together with the EDM for the oracle bandwidths $h_{\rm AEDM}$ and $h_{\rm MISE,1}$. Bottom panel: average distances in measure (and their standard error) for the proposed bandwidth selectors and the plug-in bandwidth for density gradient estimation. Results refer to density M1.}
\label{tab:density8}
\begin{center}
\begin{tabular}{lccc}
\hline
 &
n =100&n=1000&n=10000\\
  \hline
& \hspace{-.5cm}  \includegraphics[height=2.5cm]{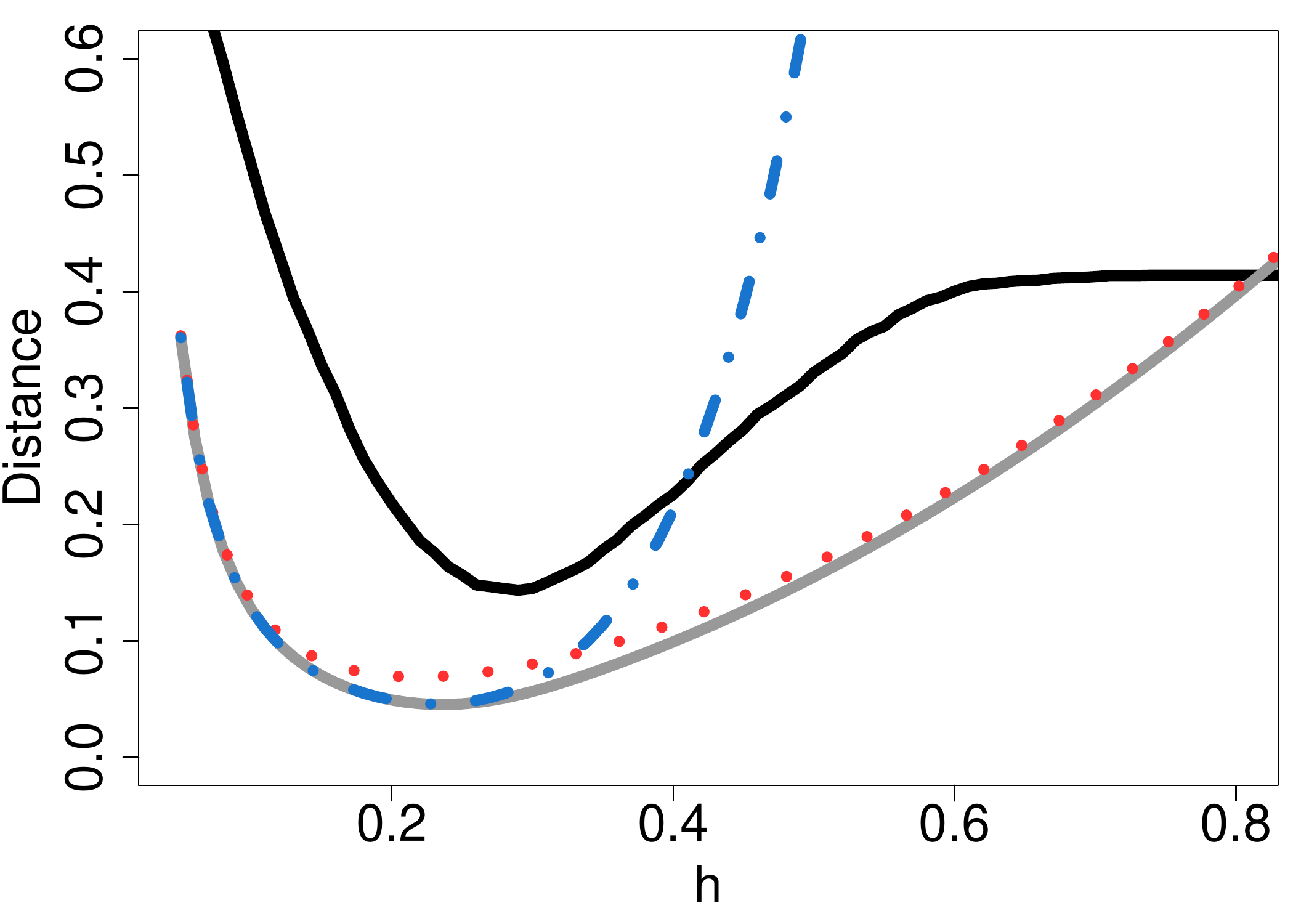}&
   \hspace{-.5cm} \includegraphics[height=2.5cm]{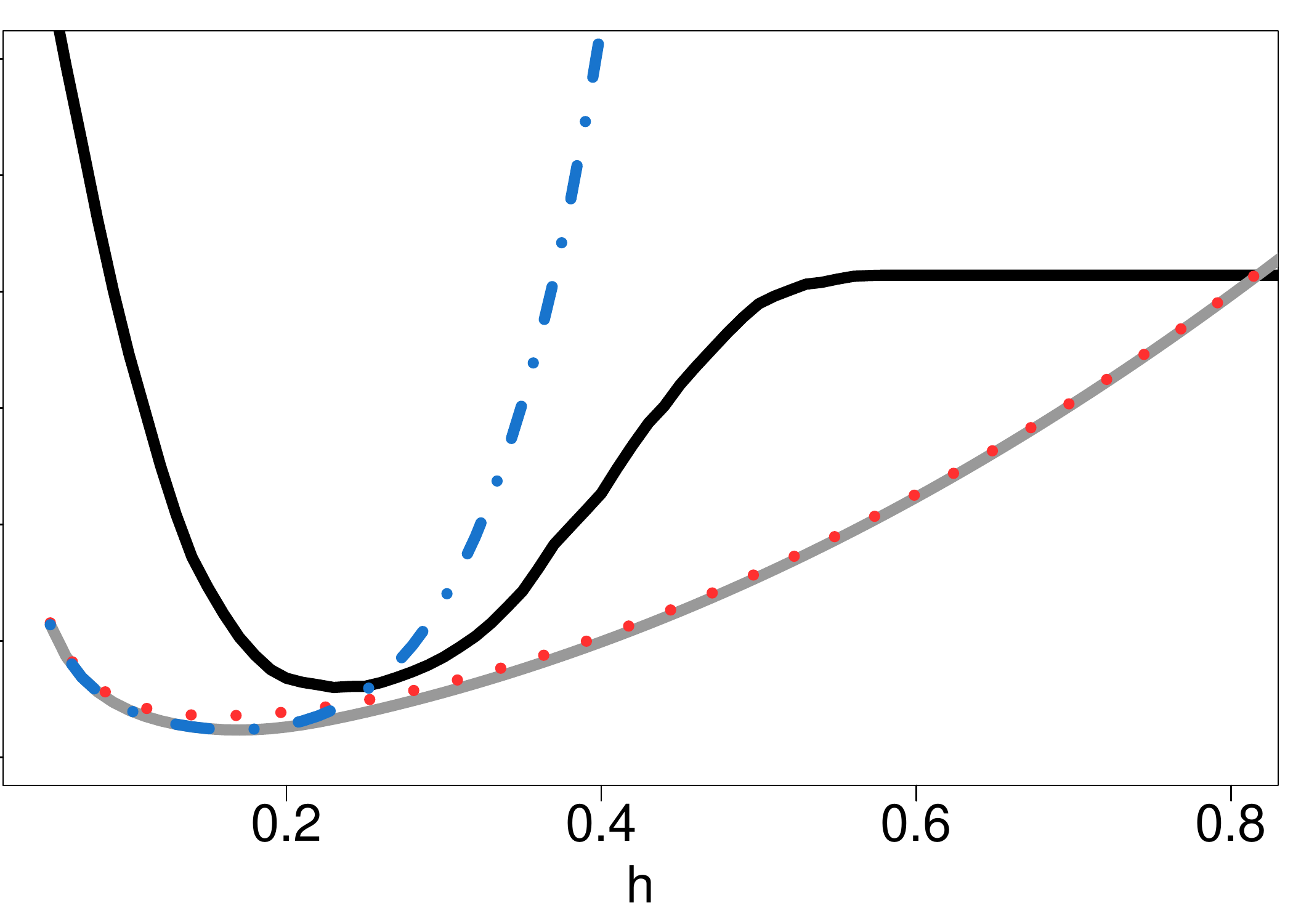}&
   \hspace{-.5cm}  \includegraphics[height=2.5cm]{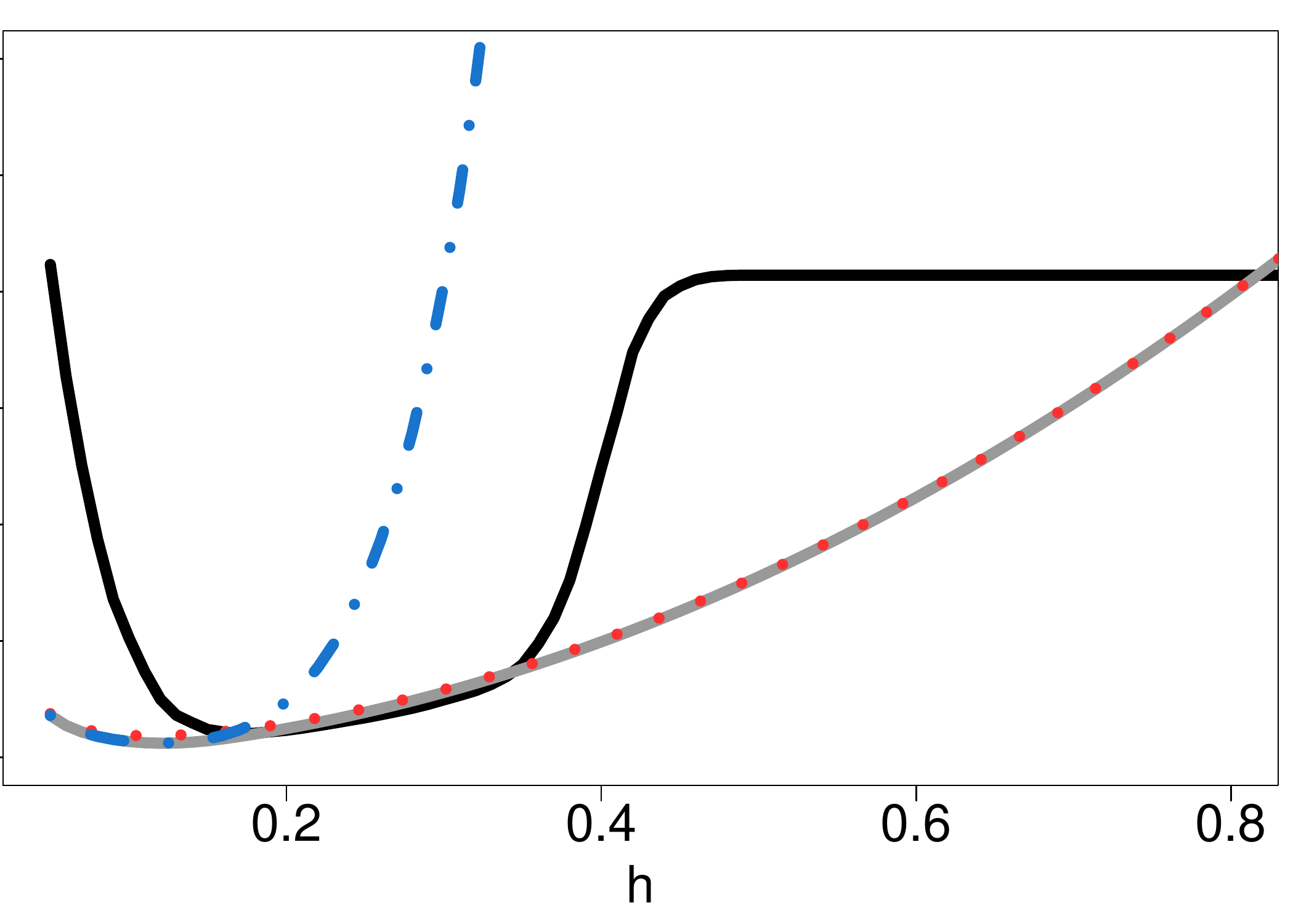}\\
 $h_{\rm EDM}$ & 0.144 & 0.060 & 0.020 \\	
 $h_{\rm AEDM}$ & 0.164 & 0.103 & 0.050 \\
 $h_{\rm MISE, 1} $ & 0.146 & 0.081 & 0.044 \\
   \hline
 $\hat{h}_{\rm AEDM}$ & 0.331 (0.148) &0.150 (0.164) &0.034 (0.076)\\
 $\hat{h}_{\rm AB1}$ &0.297 (0.161) &0.111 (0.134) &0.027 (0.060)\\
  $\hat{h}_{\rm AB2}$ & 0.320 (0.150) &0.128 (0.147) &0.028 (0.060)\\
   $\hat{h}_{\rm PI, 1}$ & 0.221 (0.176) &0.063 (0.084) & 0.029 (0.052) \\
 \hline
\end{tabular}
\end{center}
\end{table}

\begin{table}[t!]
\caption{Cf. Table \ref{tab:density8}. Results refer to density M2.}\label{tab:density9}
\begin{center}
\begin{tabular}{lccc}
\hline
 &
n =100&n=1000&n=10000\\
  \hline
& \hspace{-.5cm}  \includegraphics[height=2.5cm]{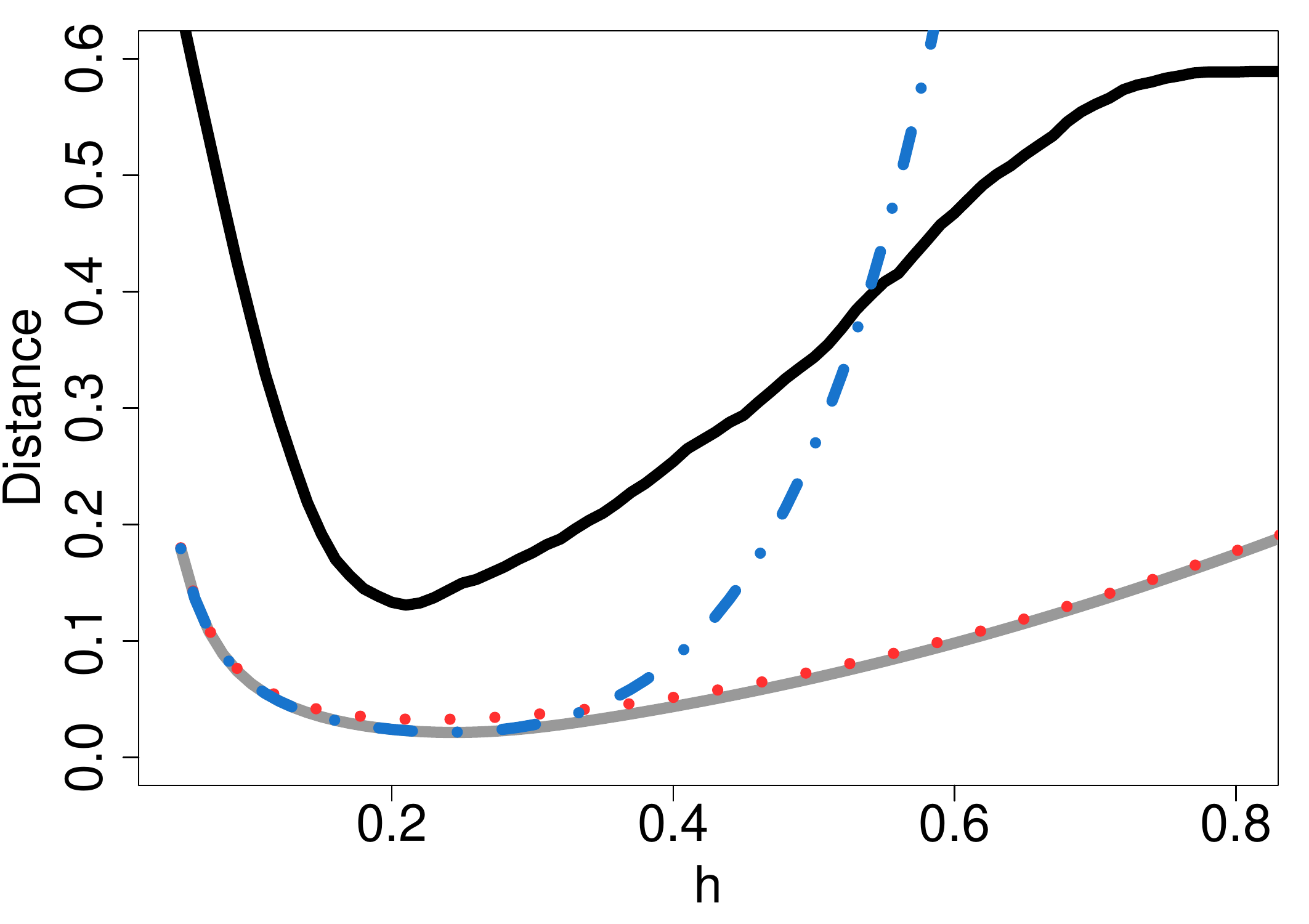}&
   \hspace{-.5cm}  \includegraphics[height=2.5cm]{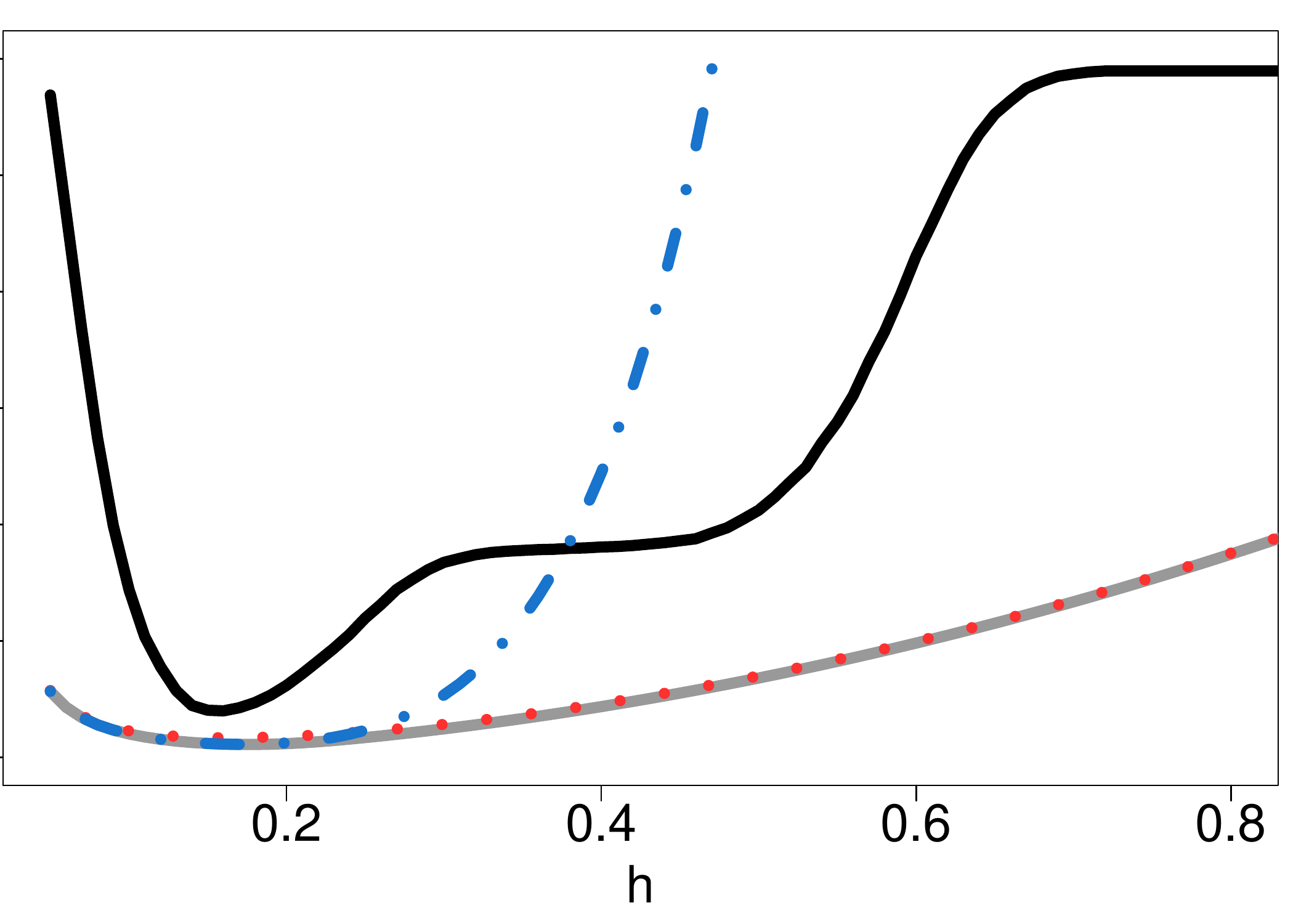}&
   \hspace{-.5cm}  \includegraphics[height=2.5cm]{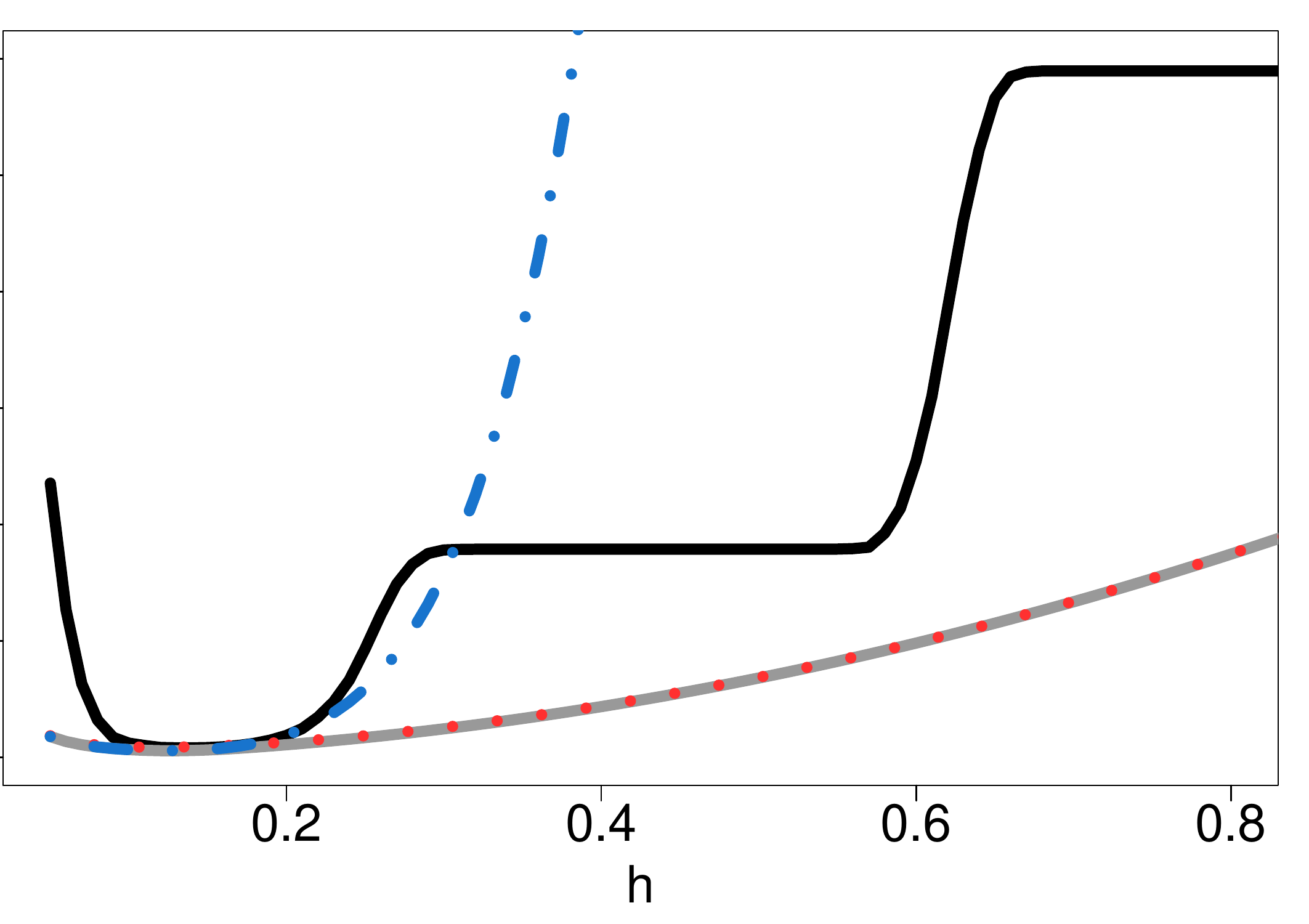}\\
 $h_{\rm EDM}$ & 0.131 & 0.040 & 0.008 \\	
 $h_{\rm AEDM}$ &  0.143 & 0.047 & 0.008 \\
  $h_{\rm MISE, 1} $& 0.165 & 0.041 & 0.011 \\
   \hline
 $\hat{h}_{\rm AEDM}$ & 0.437 (0.195) & 0.201 (0.122) & 0.009 (0.005)\\
 $\hat{h}_{\rm AB1}$ & 0.382 (0.201) & 0.185 (0.118) & 0.008 (0.005) \\
  $\hat{h}_{\rm AB2}$ & 0.423 (0.197) & 0.196 (0.118) & 0.008 (0.005) \\
   $\hat{h}_{\rm PI, 1}$ & 0.256 (0.159) & 0.092 (0.076) & 0.008 (0.005) \\
 \hline
\end{tabular}
\end{center}
\end{table}

\begin{table}[H]
\caption{Cf. Table \ref{tab:density8}. Results refer to density M3.}\label{tab:plateaux}
\begin{center}
\begin{tabular}{lccc}
\hline
 &
n =100&n=1000&n=10000\\
  \hline
& \hspace{-.5cm}  \includegraphics[height=2.5cm]{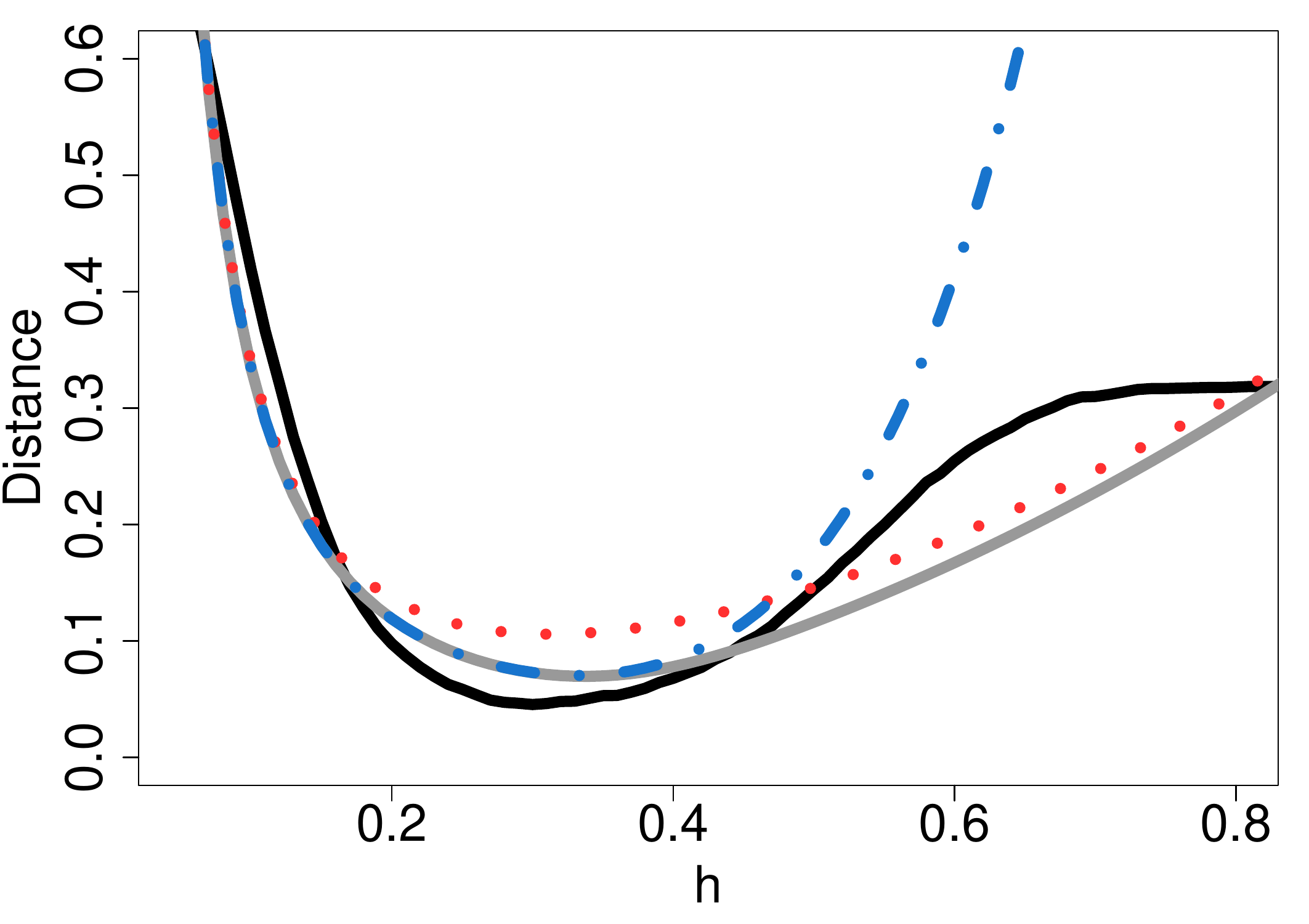}&
   \hspace{-.5cm}  \includegraphics[height=2.5cm]{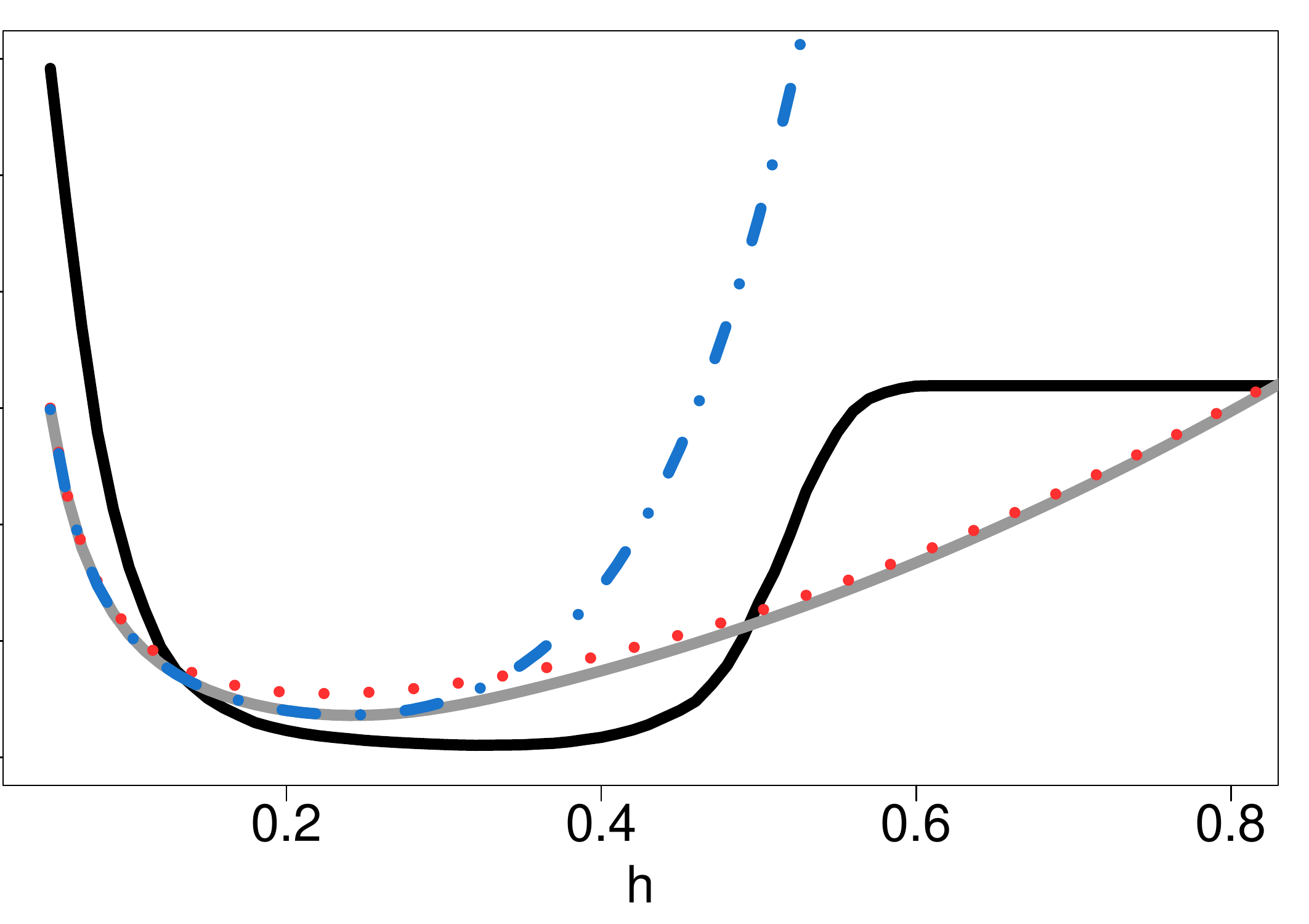}&
   \hspace{-.5cm}  \includegraphics[height=2.5cm]{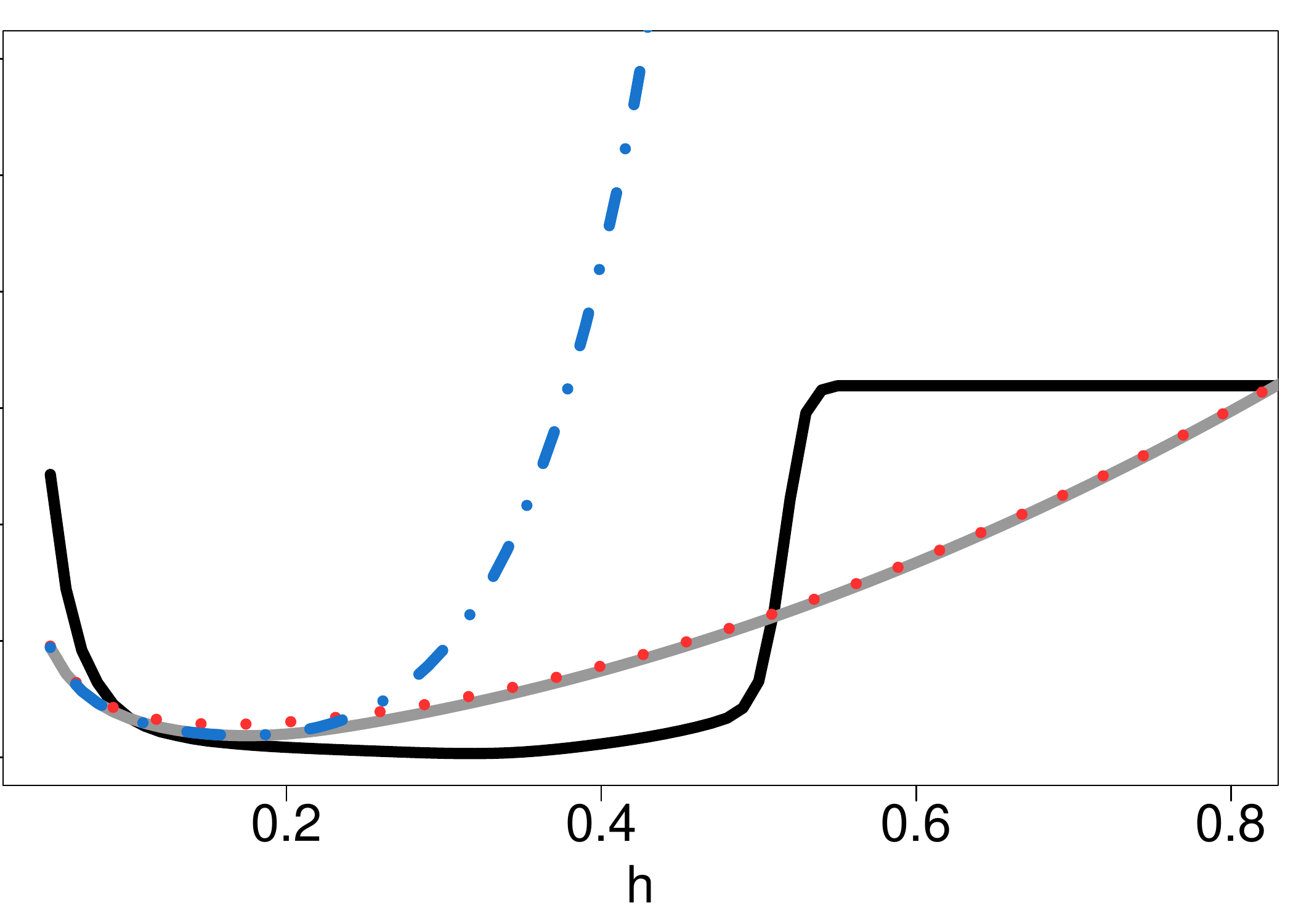}\\
 $h_{\rm EDM}$ & 0.045 & 0.010 & 0.003 \\	
 $h_{\rm AEDM}$ & 0.051 & 0.016 & 0.011 \\
  $h_{\rm MISE, 1} $& 0.054 & 0.034 & 0.022 \\
   \hline
 $\hat{h}_{\rm AEDM}$ & 0.071 (0.072) & 0.042 (0.086) & 0.017 (0.036)\\
 $\hat{h}_{\rm AB1}$ & 0.059 (0.059) & 0.038 (0.075) & 0.017 (0.031) \\
  $\hat{h}_{\rm AB2}$ & 0.067 (0.067) & 0.040 (0.080) & 0.017 (0.033) \\
   $\hat{h}_{\rm PI, 1}$ & 0.050 (0.072) & 0.024 (0.025) & 0.019 (0.017) \\
 \hline
\end{tabular}
\end{center}
\end{table}

To appreciate how much is lost by changing the target from the optimal $h_{\rm EDM}$ to the oracle surrogates $h_{\rm AEDM}$ and $h_{\rm MISE,1}$, the first three lines in each table also present the values for the corresponding EDM, all computed under a full knowledge of the density and its involved features. By construction, ${\rm EDM}(h_{\rm EDM})$ is the lowest of these values and, being derived as an asymptotic approximation, the oracle $h_{\rm AEDM}$ stands close to this optimal value, especially for larger sample sizes. However, it is remarkable that $h_{\rm MISE,1}$, despite being based on a different optimality criterion, also leads to comparable or even improved results over $h_{\rm AEDM}$ in terms of the EDM.

\begin{table}[t!]
\caption{Cf. Table \ref{tab:density8}. Results refer to density M4.}\label{tab:skbim}
\begin{center}
\begin{tabular}{lccc}
\hline
 &
n =100&n=1000&n=10000\\
  \hline
& \hspace{-.5cm}  \includegraphics[height=2.5cm]{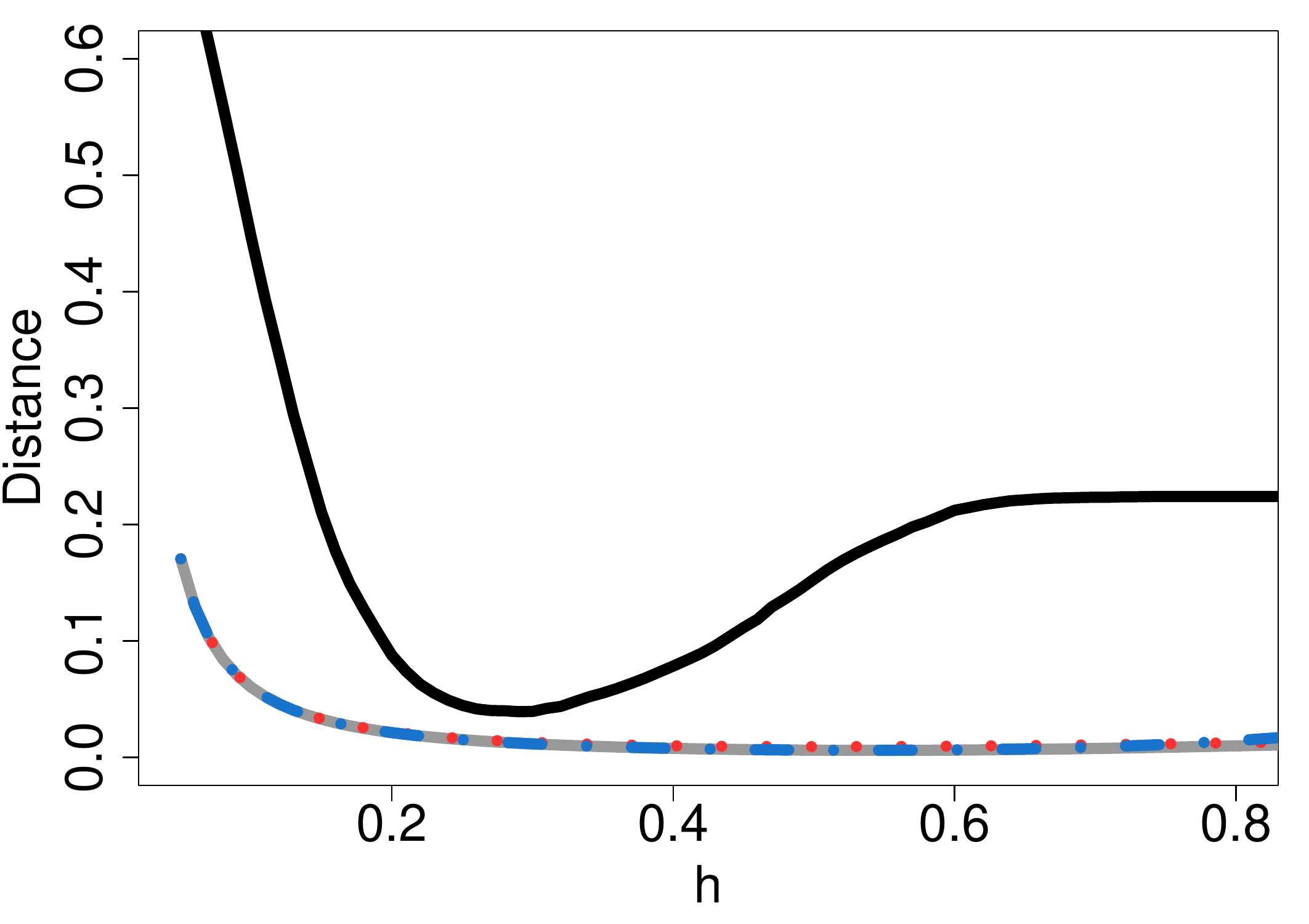}&
   \hspace{-.5cm}  \includegraphics[height=2.5cm]{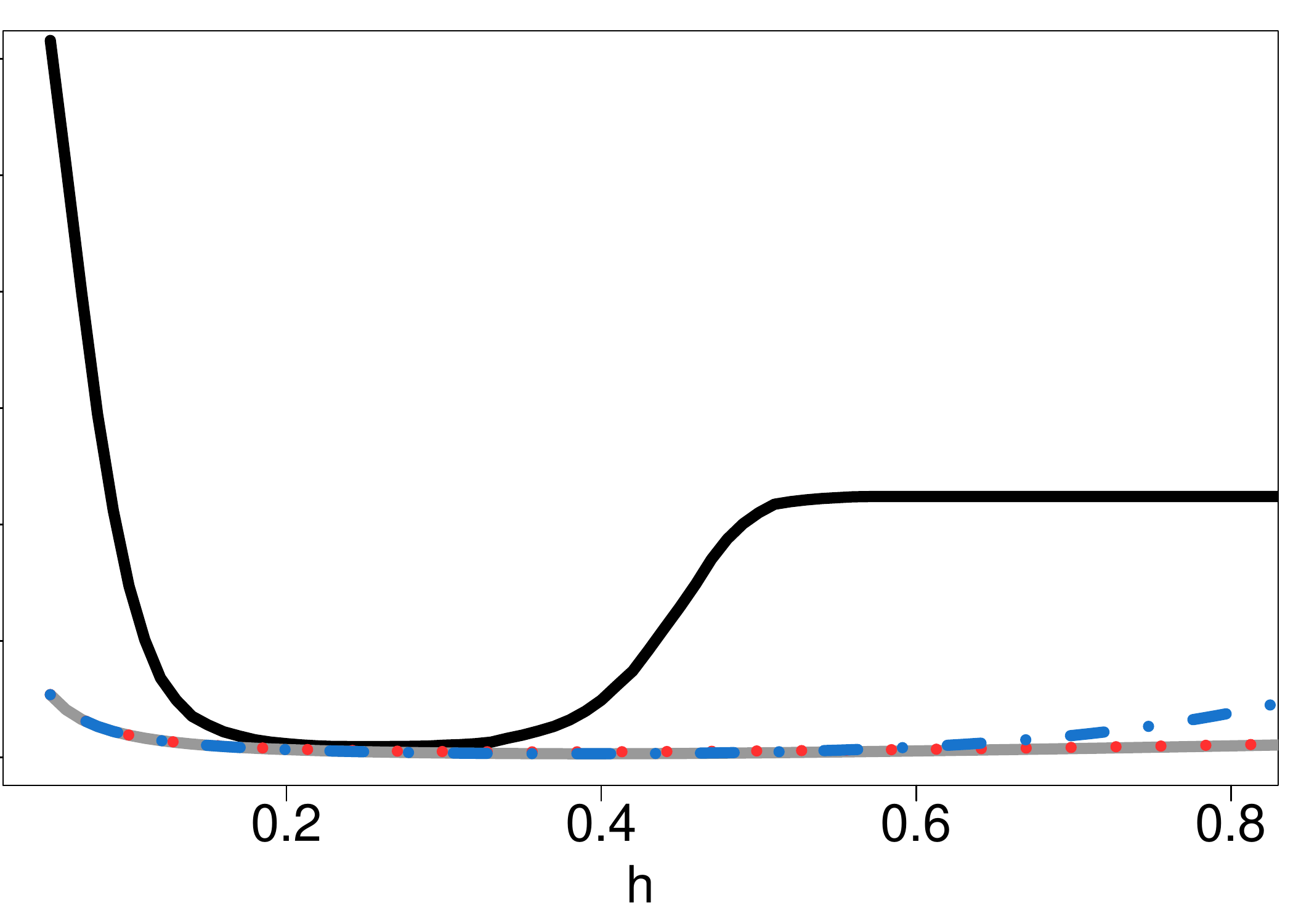}&
   \hspace{-.5cm}  \includegraphics[height=2.5cm]{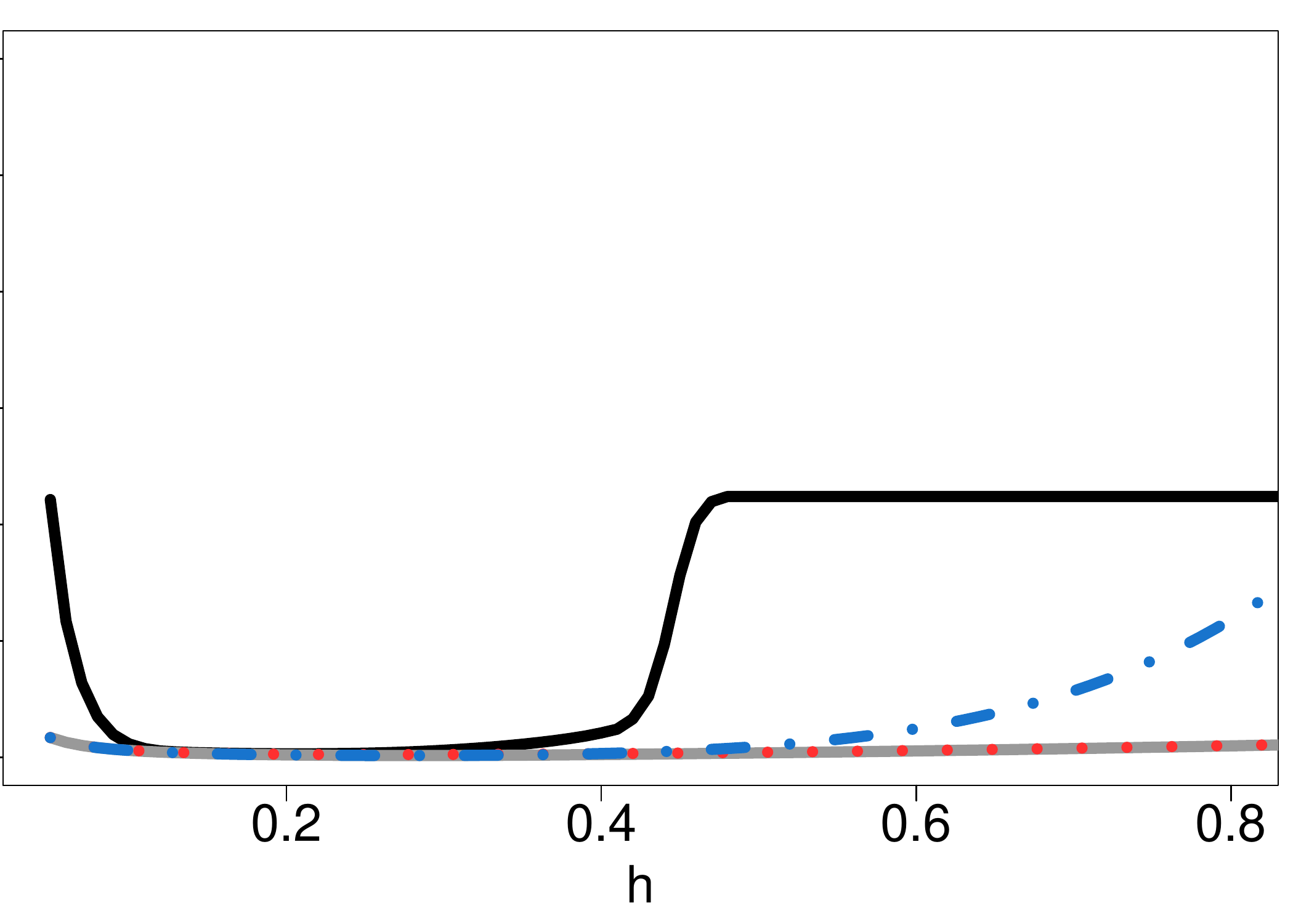}\\
$h_{\rm EDM}$ & 0.039 & 0.009 & 0.003 \\
$h_{\rm AEDM}$ & 0.187 & 0.040 &  0.005\\
  $h_{\rm MISE, 1} $ & 0.040 & 0.015 & 0.005 \\	
    \hline
 $\hat{h}_{\rm AEDM}$ & 0.066 (0.088) & 0.018 (0.054) & 0.011 (0.037) \\
 $\hat{h}_{\rm AB1}$ & 0.057 (0.077) & 0.014 (0.032) & 0.009 (0.033) \\
  $\hat{h}_{\rm AB2}$ & 0.063 (0.082) & 0.015 (0.035) & 0.009 (0.033) \\
   $\hat{h}_{\rm PI, 1}$ & 0.051 (0.069) & 0.011 (0.014) & 0.005 (0.005)\\
 \hline
\end{tabular}
\end{center}
\end{table}

\begin{table}[H]
\caption{Cf. Table \ref{tab:density8}. Results refer to density M5.}\label{tab:trimodal}
\begin{center}
\begin{tabular}{lccc}
\hline
 & 	
n =100&n=1000&n=10000\\
  \hline
& \hspace{-.5cm}  \includegraphics[height=2.5cm]{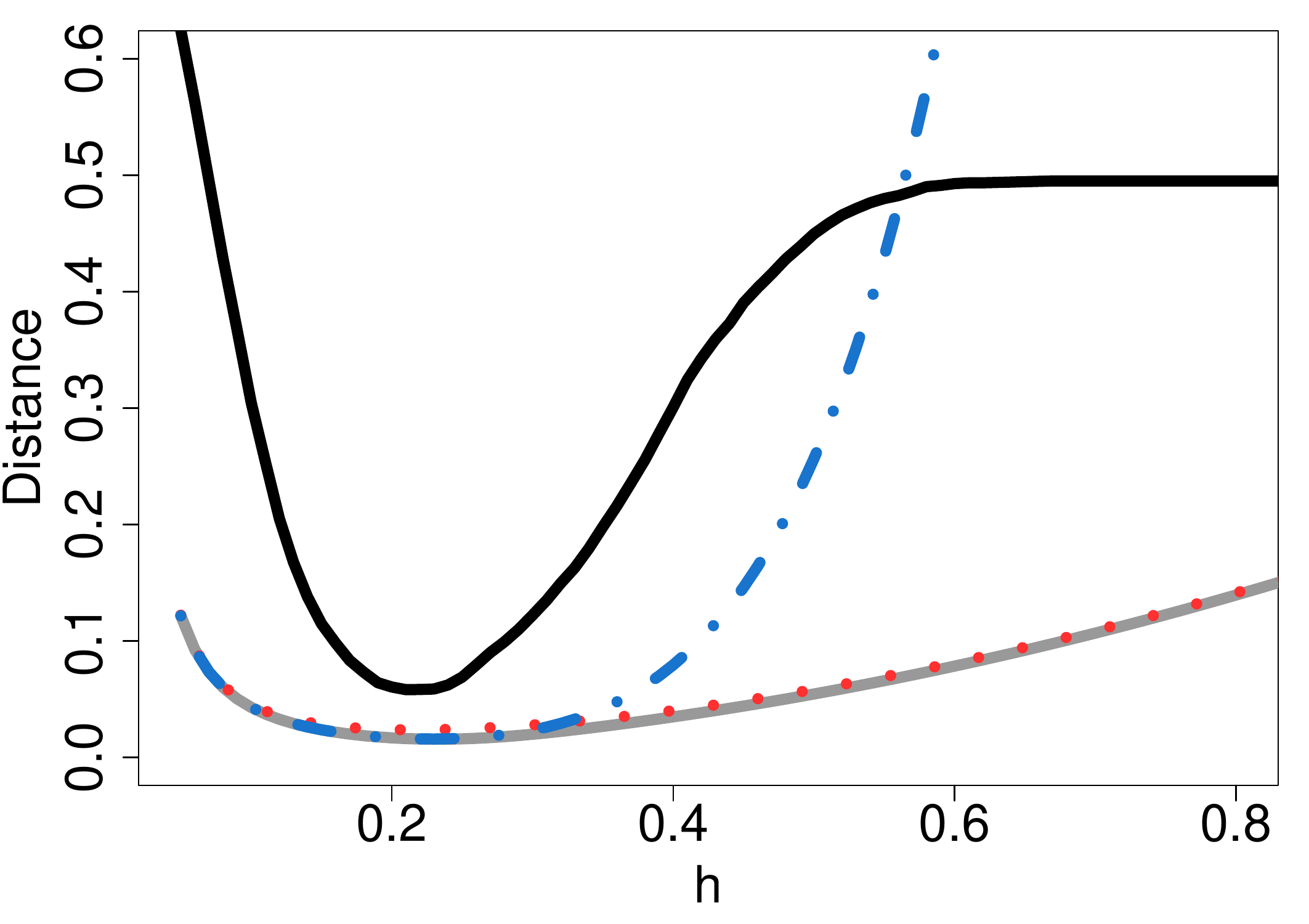}&
   \hspace{-.5cm}  \includegraphics[height=2.5cm]{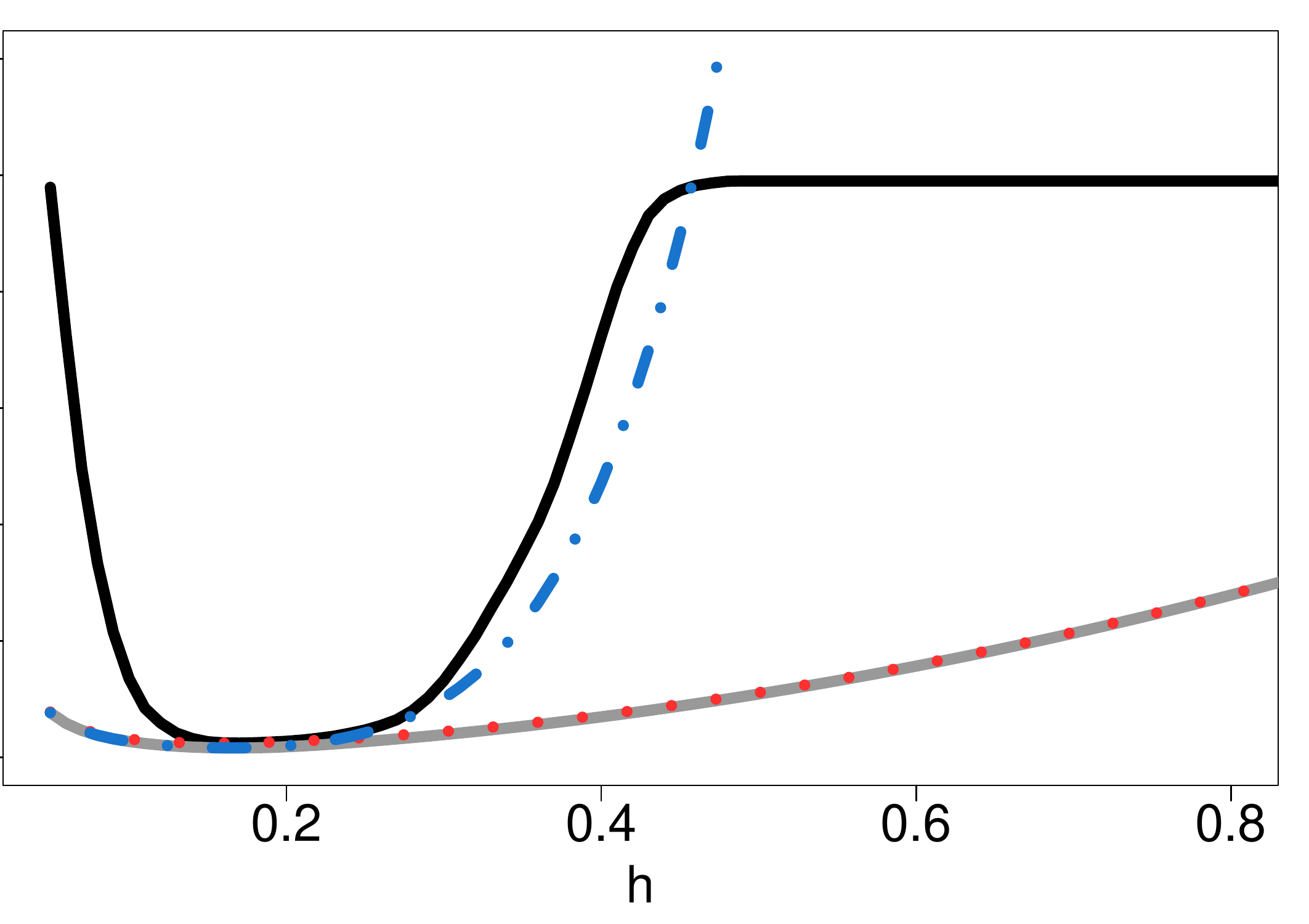}&
   \hspace{-.5cm}  \includegraphics[height=2.5cm]{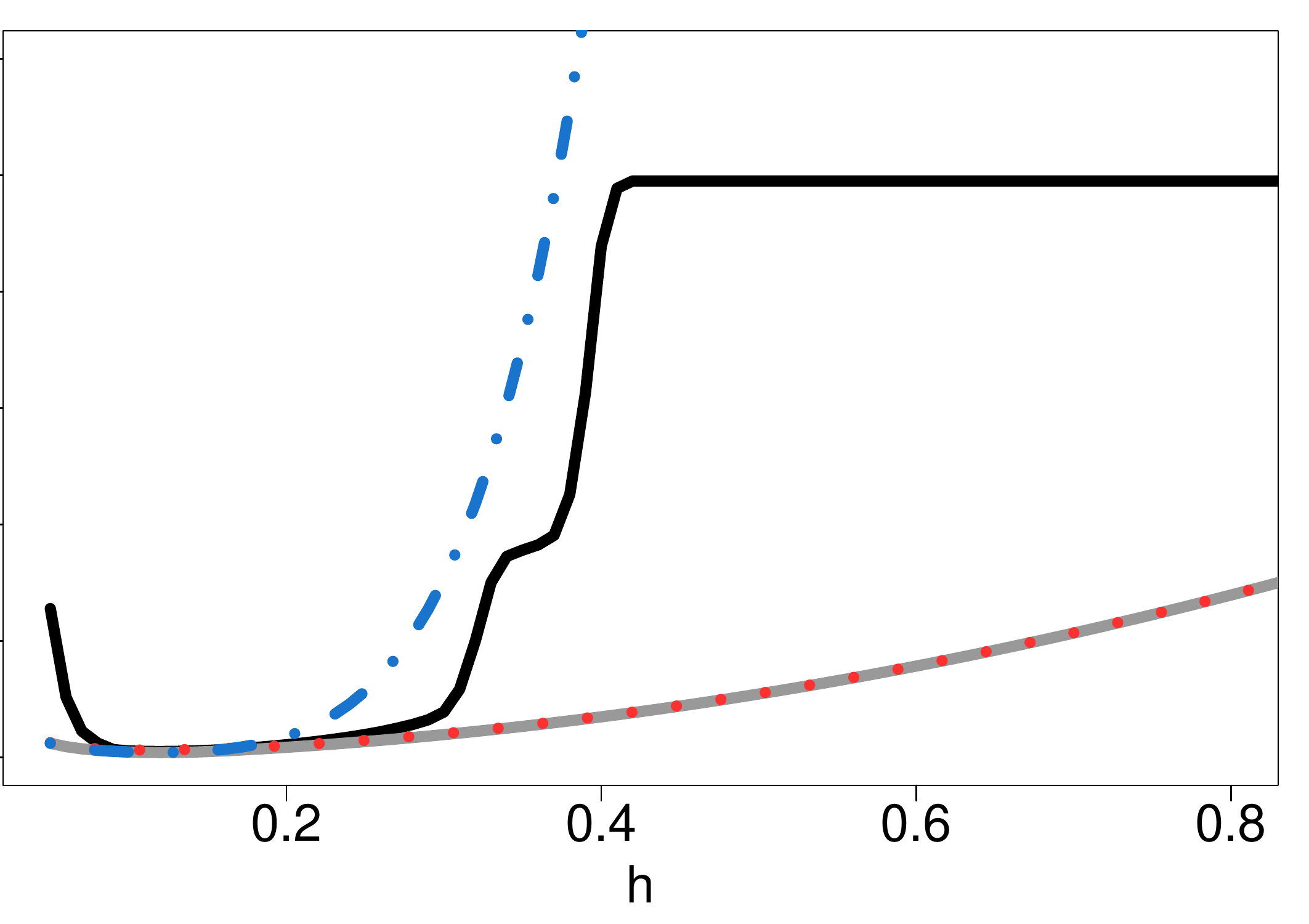}\\
$h_{\rm EDM}$ & 0.058 & 0.012 & 0.005 \\
$h_{\rm AEDM}$ &  0.059 & 0.012 & 0.005 \\
  $h_{\rm MISE, 1} $& 0.058 & 0.013 & 0.005 \\	
    \hline
 $\hat{h}_{\rm AEDM}$ & 0.222 (0.184) & 0.015 (0.018) & 0.005 (0.003) \\
 $\hat{h}_{\rm AB1}$ & 0.195 (0.180) & 0.013 (0.010) & 0.005 (0.003) \\
  $\hat{h}_{\rm AB2}$ & 0.212 (0.184) & 0.014 (0.011) & 0.005 (0.003) \\
   $\hat{h}_{\rm PI, 1}$ & 0.179 (0.158) & 0.013 (0.009) & 0.005 (0.003) \\
 \hline
\end{tabular}
\end{center}
\end{table}

As a second goal, we propose new data-based bandwidth selectors specifically designed for modal clustering purposes. The first step consists in estimating the number of local minima, and their location. This is achieved by numerically finding the roots of a pilot estimate of $f^{(1)}$, constructed as the derivative of the kernel density estimator using the plug-in gradient bandwidth $\hat h_{\rm PI,1}$. Then, similarly, using plug-in bandwidths for $f$, $f^{(2)}$ and $f^{(3)}$, we obtain pilot estimates of these involved unknown functions at the estimated local minima. These quantities are subsequently plugged-in in the formulas of the AEDM, AB1 and AB2, and the minimizers of the resulting estimated criteria are found; in the case of the estimated AEDM by numerical minimization, and according to expressions \eqref{eq:hbound1} and \eqref{eq:hbound2} for AB1 and AB2 respectively. The data-based bandwidths thus obtained are denoted $\hat h_{\rm AEDM}$, $\hat h_{\rm AB1}$ and $\hat h_{\rm AB2}$, respectively.

Occasionally (although rarely) the first step in the procedure above yielded a single mode, and then the AEDM was undefined. In those cases, and according to the rationale exposed in Remark \ref{rem:r1}, a sensible choice for $h$ is the \textit{critical bandwdith} proposed by \cite{Silverman81},
$$\hat h_{\rm crit} = \inf \{h>0 : \hat f_h (\cdot) \mbox{ has exactly one mode}\},$$
so in that case we set $\hat h_{\rm AEDM}=\hat h_{\rm AB1}=\hat h_{\rm AB2}=\hat h_{\rm crit}$. 

Tables \ref{tab:density8} to \ref{tab:trimodal} also contain the Monte Carlo averages and standard deviations of the distances in measure obtained when performing modal clustering using the bandwidth selectors $\hat h_{\rm AEDM}$, $\hat h_{\rm AB1}$ and $\hat h_{\rm AB2}$. For completeness, their performance is also compared to that of $\hat h_{\rm PI,1}$, which so far probably represents their most sensible competitor in the clustering framework (see \cite{chaconmonfort}).

In general, $\hat h_{\rm AB1}$ and $\hat h_{\rm AB2}$ led to more accurate clusterings than $\hat h_{\rm AEDM}$, with a slight preference for $\hat h_{\rm AB1}$. The gradient-based bandwidth $\hat h_{\rm PI,1}$, in turn, not only produces competitive results, but its Monte Carlo average distance in measure appears lower than the one produced by the asymptotic EDM minimizers. In fact, a deeper insight into the standard errors of the obtained distances shows that $\hat h_{\rm AEDM}$, as well as $\hat h_{\rm AB1}$ and $\hat h_{\rm AB2}$, produce more variable results. The higher variability
seems to be due to the sensitivity of the minimizers to the plugged in pilot estimates, which strongly depend on local features of the density.
Some further investigations, not fully reported here, suggest that the main responsible for this behaviour is not the pilot estimate of the local minima but the pilot density derivatives estimates at the minimum points. Also, due to the use of different pilot bandwidths to estimate the unknown $m_j$, $f^{(2)}$, and $f^{(3)}$, it may occur, indeed, that $\hat{f}^{(2)}(\hat m_j)$ assumes even negative values. On the other hand,  while relying as well on some plug-in estimates, the gradient-based bandwidth $\hat h_{\rm PI,1}$ produces more robust clusterings, as the quantities to be estimated refer conversely to global features of the density. As expected, this diverging behavior tends to vanish with increasing sample size since the asymptotic approximations improve. As a confirmation, with $n=10000$, all the considered bandwidths perform comparably.

\section{Multidimensional generalization}\label{sec:multidim}

The concepts discussed so far refer to the one-dimensional setting where a mathematically rigorous treatment is feasible. The multidimensional generalization poses some difficulties since obtaining an asymptotic approximation of the EDM appears far from trivial. Hence, in order to gain some insight into the problem of selecting the amount of smoothing for nonparametric clustering in more than one dimension, some numerical comparisons are performed assuming the true density as known.

Denote by $f: \mathbb R^d \rightarrow \mathbb R$ the true density function and by
 	 		\begin{equation}
 \hat{f}_\mathbf{H}(\mathbf{x})= \frac1n \sum_{i=1}^n |\mathbf{H}|^{-1/2}K\left(\mathbf{H}^{-1/2}(\mathbf{x} - \mathbf{X}_i)\right)\; ,
\end{equation}
its kernel estimate based on a sample $\mathbf{X}_1,\dots,\mathbf{X}_n$ and indexed by a symmetric positive definite $d\times d$ bandwidth matrix $\mathbf{H}$. The problem of bandwidth selection is considered by studying the EDM between the clustering induced by the kernel estimate $\hat{\mathscr C}_\mathbf{H}$ and the ideal population clustering $\mathscr{C}_0$. These clusterings are not so easily identifiable as in the unidimensional setting, due to the arbitrary forms that the cluster boundaries may adopt, however an approximation of the distance in measure $d(\hat{\mathscr C}_\mathbf{H},\mathscr{C}_0)$ can be computed by resorting to a discretization scheme as follows (see \cite{chaconmonfort} for further details):

 	\begin{enumerate}
 		\item Take a grid over the sample space and rule the grid by considering hyper-rectangles centered at each grid point.
 		\item Assign a cluster membership to each grid point by running a population version of the mean-shift algorithm i.e. using the true density. This produces a discretized version of $\mathscr{C}_0$.
 		\item Similarly, obtain the data-based partition $\hat{\mathscr C}_\mathbf{H}$ induced by $\hat{f}_\mathbf{H}.$
 		\item Compute the probability mass of each single hyper-rectangle in  $\mathscr{C}_0$. 
 		\item Compute the distance in measure as in (\ref{eq:distmeas}) where the involved probabilities are evaluated based on the previous step.
 	\end{enumerate}
 	
For the multidimensional simulation study, a total of $B=1000$ samples for each of the sizes $n \in \{100, 1000\}$ were generated from the bivariate densities whose contour plots are shown in Figure \ref{fig:densities2} and described in Appendix \ref{App:settings}. The densities have been chosen to generalize the settings M1 and M5 included in the univariate study.
\begin{figure}[bt]
\begin{center}
\begin{minipage}{0.4\textwidth}
	\includegraphics[scale=.5]{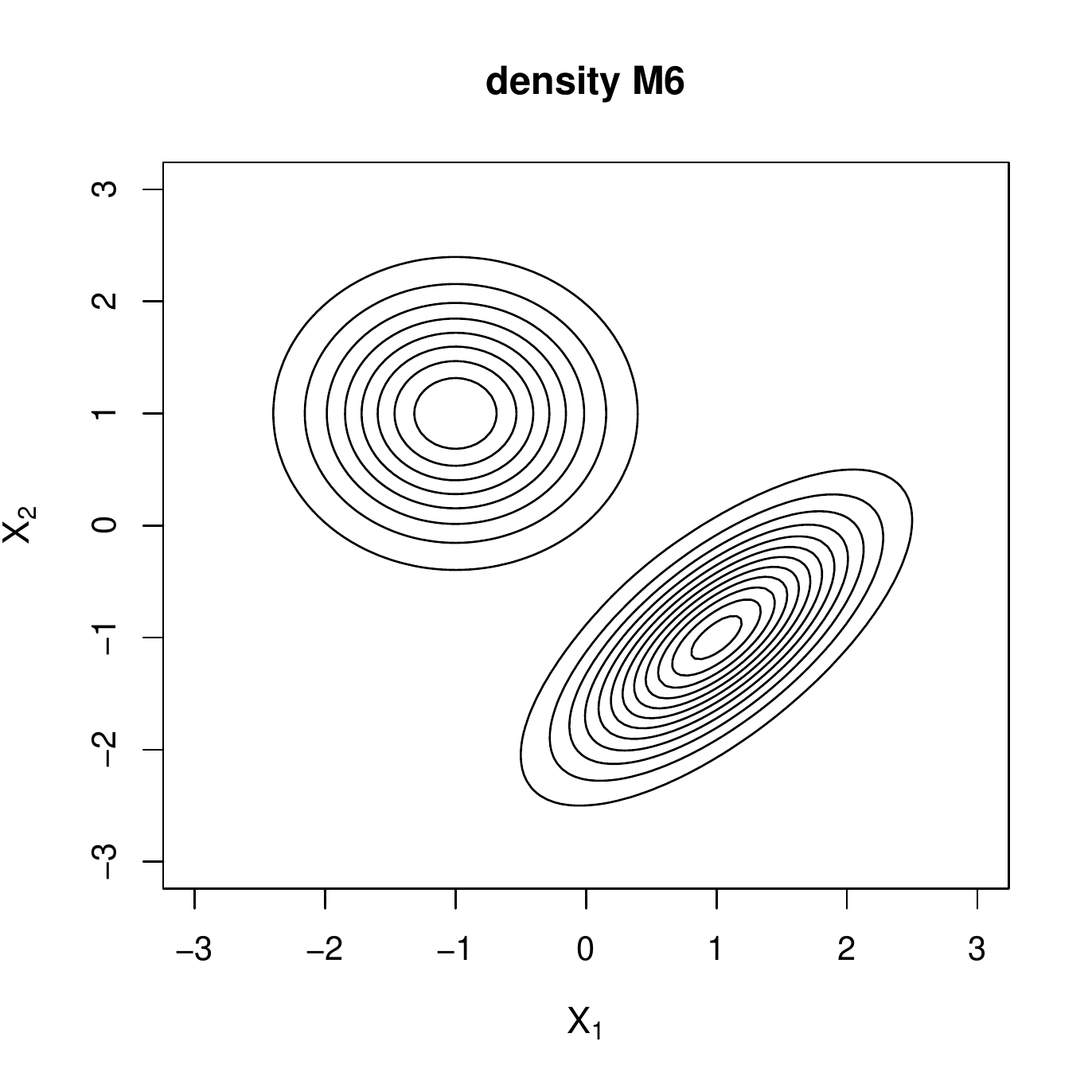}
	\end{minipage}
\hspace{0.1cm}\begin{minipage}{0.4\textwidth}
	\includegraphics[scale=.5]{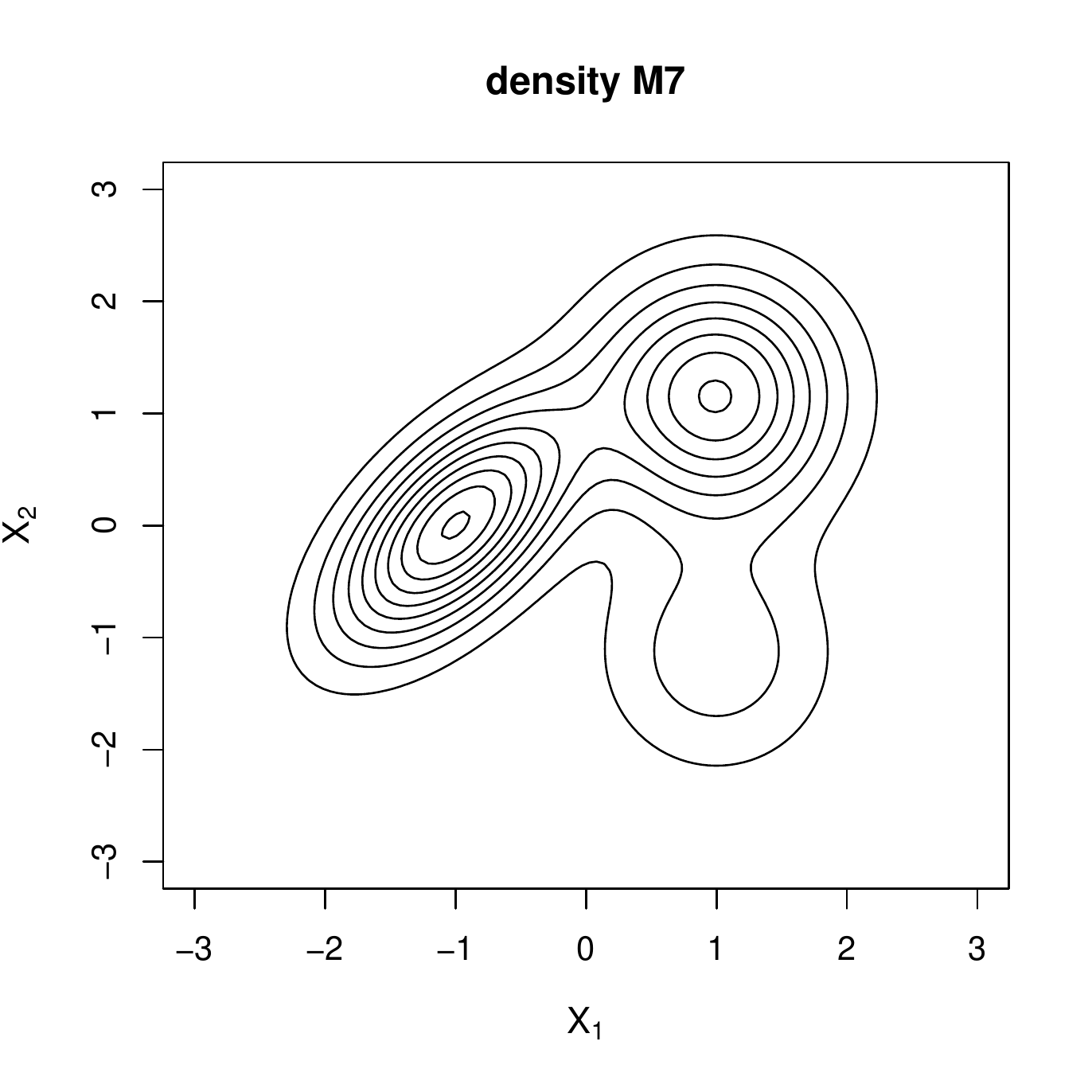}
\end{minipage}
\end{center}
	\caption{Bivariate density functions selected for simulations.}\label{fig:densities2}
\end{figure}

Three different parametrizations for the bandwidth matrix were considered: a scalar bandwidth $\mathbf H=h^2\mathbf I,$ with $\mathbf I$ the identity matrix, a diagonal bandwidth $\mathbf H=\text{diag}(h_1^2,h_2^2)$, and a full, unconstrained bandwidth matrix $\mathbf H.$ For density and density derivative estimation, \cite{wandjones93} and \cite{chacon_etal11} showed that the use of the simplest scalar bandwidth can be quite detrimental in practice, a diagonal bandwidth may suffice in some scenarios, but in general it is advantageous to employ unconstrained bandwidth matrices (see also \cite{chaconduongbook}). However, such results have never been obtained in a modal clustering framework; thus one of the goals of this simulation study is to examine how the bandwidth matrix parametrization affects the performances of the procedures.

Using the synthetic samples from each density in the study, it was possible to obtain a Monte Carlo estimate of the (discretized version of the) EDM, which was then minimized over the class of scalar, diagonal and unconstrained bandwidth matrices. The EDM was computed also for the MISE-optimal bandwidth for density gradient estimation over the same matrix classes. In both cases, the true density as well as all the involved quantities were assumed to be known. The EDM minimizers were determined numerically, by running the procedure over a grid of sensible values of the entries, while the optimal matrices for gradient estimation were determined as in \cite{chacon_etal11}.

The results are reported in Tables \ref{tab:density8_mult} and \ref{tab:density9_mult}. Clustering based on the optimal bandwidth according to the EDM is very accurate in both of the considered examples, and improves considerably for increasing sample size. The use of more complex bandwidth parametrizations does not seem worth for modal clustering since results obtained with a full, unconstrained bandwidth matrix are comparable with those obtained with a scalar bandwidth, while the latter requires a substantially smaller computational effort.

In the multidimensional setting, the gradient bandwidth is quite competitive in terms of EDM, as in the univariate case. Again the comparable performance of unconstrained bandwidth matrices does not seem to justify the use of more complex parametrizations.

\begin{table}[t!]
\caption{Minimum EDM associated with a density estimate with bandwidth matrix $\mathbf H$ selected to minimize the EDM ($\mathbf H_{\rm EDM}$) and the MISE for gradient estimation ($\mathbf H_{\rm MISE,1}$). Different parametrizations for $\mathbf H$ are considered. In both cases, the true density as well as all the involved quantities are assumed to be known.
Results refer to density M6.  }\label{tab:density8_mult}
\begin{center}
\begin{tabular}{lcccc}
&\multicolumn{2}{c}{$\mathbf H_{\rm EDM}$} &\multicolumn{2}{c}{$\mathbf H_{\rm MISE,1}$}\\
\hline
 & n =100&n=1000 & n =100&n=1000\\
  \hline
 $ \left( \begin{array}{cc}
h^2 & 0 \\
0 & h^2
\end{array} \hspace{.25cm}\right) \; \; \; \;$ & 0.006 & 0.004 & 0.064 & 0.040 \\	
 $\left( \begin{array}{cc}
h_1^2 & 0 \\
0 & h_2^2
\end{array} \hspace{.25cm}\right) \; \; \; \;$ &0.006& 0.004 & 0.064 & 0.040  \\
 $\left( \begin{array}{cc}
h_1^2 & h_{12} \\
h_{12} & h_2^2
\end{array} \right) \; \; \; \;$ & 0.005& 0.003& 0.042 & 0.024 \\
\hline
\end{tabular}
\end{center}
\end{table}

\begin{table}[H]
\caption{Cf. Table \ref{tab:density8_mult}. Results refer to density M7.}\label{tab:density9_mult}
\begin{center}
\begin{tabular}{lcccc}
&\multicolumn{2}{c}{$\mathbf H_{\rm EDM}$} &\multicolumn{2}{c}{$\mathbf H_{\rm MISE,1}$}\\
\hline
 & n =100& n=1000 & n =100&n=1000\\
  \hline
 $ \left( \begin{array}{cc}
h^2 & 0 \\
0 & h^2
\end{array} \hspace{.25cm}\right) \; \; \; \;$ & 0.114 & 0.044 & 0.116 & 0.054  \\	
 $\left( \begin{array}{cc}
h_1^2 & 0 \\
0 & h_2^2
\end{array} \hspace{.25cm}\right) \; \; \; \;$ &0.114& 0.042 & 0.115 & 0.055  \\
 $\left( \begin{array}{cc}
h_1^2 & h_{12} \\
h_{12} & h_2^2
\end{array} \right) \; \; \; \;$ & 0.110 & 0.040 & 0.121 & 0.054 \\
\hline
\end{tabular}
\end{center}
\end{table}

\section{Conclusions}\label{sec:conclusion}

The modal clustering methodology provides a framework to perform cluster analysis with a clear and explicit population goal. It allows clusters of arbitrary shape and size, which can be captured by means of a nonparametric density estimator. In this context, the distance in measure represents a natural and easily interpretable error criterion. Therefore, in this paper we have presented an asymptotic study of this criterion for the case where density estimates of kernel type are employed to obtain a whole-space clustering via the mean shift algorithm.

Our asymptotic approximations are useful to gain insight into the fundamental problem of bandwidth selection for modal clustering and, at the same time, serve as the basis to propose practical data-based bandwidth choices specifically designed for clustering purposes.

The finite-sample performance of the new proposals was investigated in a thorough simulation study, and compared to the oracle bandwidths i.e. the optimal choices when the true population is fully known. The gradient bandwidth, designed for the closely related problem of density gradient estimation, was also included as a natural competitor in the study.

The results of this simulation study have suggested that all the methods perform quite satisfactorily, and exhibit a very similar behavior for large sample sizes. For smaller samples, the performance of the gradient bandwidth was rather remarkable, since it obtained comparable or even better results than the new proposals, even without being specifically conceived for modal clustering.

This phenomenon resembles the conclusions obtained in \cite{SGR2014} regarding the related problem of level set estimation. There, it was shown that the traditional bandwidth selectors for density estimation often outperformed more sophisticated methods designed for level set estimation purposes. The common pattern in both situations is that the optimal choices for the specific problems (level set estimation and modal clustering, respectively) depend on very subtle local features of the unknown density function, which are difficult to estimate, so that choices based on a more global, yet somehow related, perspective represent a sensible alternative.


\appendix

\section{Proofs}\label{proofs}

\begin{proof}[Proof of Theorem \ref{thm:AEDM}]
From Theorem 4.1 in \cite{chacon15} it follows that, with probability one, there exists $n_0\in\mathbb N$ such that the kernel density estimator $\widehat f_h$ has the same number of local minima as $f$ for all $n\geq n_0$. Let us denote by $\widehat m_{h,1}<\dots<\widehat m_{h,r-1}$ the local minima of $\widehat f_h$. Then, the expected distance in measure between the data-based clustering $\widehat{\mathscr C}_h$ and the population clustering $\mathscr C_0$ can be written as
\begin{equation}\label{eqEDMF} {\rm EDM}(h)=\sum_{j=1}^{r-1}\mathbb E|F(\widehat m_{h,j})-F(m_j)|.\end{equation}

Write, generically, $\widehat m$ and $m$ for any of the estimated and true local minima. A Taylor expansion with integral remainder allows to write $$F(\widehat m)-F(m)=(\widehat m-m)\int_0^1f\big(m+t(\widehat m-m)\big)dt.$$
The assumptions imply that $\widehat m\to m$ almost surely \citep[see, for instance,][]{Ro88} and, since $f$ is bounded and continuous, this readily yields $\int_0^1f\big(m+t(\widehat m-m)\big)dt\to f(m)$ almost surely, which entails that $\mathbb E|F(\widehat m)-F(m)|\sim f(m)\mathbb E|\widehat m-m|$. The result then follows from Equation (2.6) in \cite{GH95}, where the asymptotic form of $\mathbb E|\widehat m-m|$ is given.
\end{proof}

\begin{proof}[Proof of Lemma \ref{lem:bound2}]
From $\psi(\mu,\sigma^2)=\sigma\psi(\mu/\sigma,1)$, it suffices to show that $\psi(u,1)\leq(2/\pi)^{1/2}+(2\pi)^{-1/2}u^2$ for $u\geq0$. From the definition of $\psi$, this is equivalent to proving that $\alpha(u)\leq1$, where $\alpha(u)=e^{-u^2/2}+u\int_0^ue^{-z^2/2}dz-u^2/2$. Since $\alpha(0)=1$, it is enough to show that $\alpha$ is nonincreasing, but this immediately follows from the fact that $\alpha'(u)=\int_0^ue^{-z^2/2}dz-u$.
\end{proof}

\section{Parameter settings}\label{App:settings}
In the following the parameter settings of the densities selected for the simulations are presented. Since all the densities are mixture of Gaussian models, we adopt the usual notation where, for a given $k$ component, $\pi_k$ represent the \emph{k-th} mixture weight, $\mu_k$ and $\sigma_k^2$ ($\Sigma_k$ for the bivariate models) the mean and variance (covariance matrix).

\subsection{Unidimensional parameter settings}
\subsubsection{Density M1}
\vspace{-0.5cm}

\begin{table}[H]\centering
\hspace{2cm}
\begin{minipage}{0.4\textwidth}
\begin{tabular}{ccccc}
  \hline
 & Components & $\pi_k$ & $\mu_k$ & $\sigma_k^2$ \\
  \hline
& 1 & 0.75 & 0.00 & 0.83\\
 &  2 & 0.25 & 1.37 & 0.09\\
   \hline
\end{tabular}
\end{minipage}
\end{table}

\subsubsection{Density M2}

\begin{figure}[H]\centering
\hspace{2cm}
\begin{minipage}{0.4\textwidth}
\begin{tabular}{ccccc}
  \hline
 & Components & $\pi_k$ & $\mu_k$ & $\sigma_k^2$ \\
  \hline
& 1 & 0.45 & -0.93 & 0.22 \\
 & 2 & 0.45 & 0.93 & 0.22 \\ 
 & 3 & 0.1 & 0.00 & 0.04 \\ 
   \hline
\end{tabular}
\end{minipage}
\end{figure}

\subsubsection{Density M3 }

\begin{figure}[H]\centering
\hspace{2cm}
\begin{minipage}{0.4\textwidth}
\begin{tabular}{ccccc}
  \hline
 & Components & $\pi_k$ & $\mu_k$ & $\sigma_k^2$ \\
  \hline
& 1 & 0.5 & -0.74 & 0.14\\
 & 2 & 0.3 & 0.37 & 0.55\\
 & 3 & 0.2 & 1.47 & 0.14\\
   \hline
\end{tabular}
\end{minipage}
\end{figure}

\subsubsection{Density M4}

\begin{figure}[H]\centering
\hspace{2cm}
\begin{minipage}{0.4\textwidth}
\begin{tabular}{ccccc}
  \hline
 & Components & $\pi_k$ & $\mu_k$ & $\sigma_k^2$ \\
  \hline
& 1 & 0.15 & 0.00 & 0.44\\
 & 2 & 0.15 & -0.33 & 0.19\\
 & 3 & 0.5 & -0.99 & 0.14\\
 & 4 & 0.2 & 1.32 & 0.19\\
   \hline
\end{tabular}
\end{minipage}
\end{figure}

\subsubsection{Density M5}

\begin{figure}[H]\centering
\hspace{2cm}
\begin{minipage}{0.4\textwidth}
\begin{tabular}{ccccc}
  \hline
 & Components & $\pi_k$ & $\mu_k$ & $\sigma_k^2$  \\
  \hline
& 1 & 0.5 & 0.00 & 0.14\\
 & 2 & 0.35 & 1.28 & 0.14\\
 &  3 & 0.15 & 2.56 & 0.11\\
   \hline
\end{tabular}
\end{minipage}
\end{figure}

\subsection{Bidimensional settings}

\subsubsection{Asymmetric bimodal}

\begin{table}[H]\centering
\begin{tabular}{ccccc}
\hline
 & Components & $\pi_k$ & $\mu_k$ & $\Sigma_k$  \\
  \hline \vspace{0.2cm}
  & 1 & 0.5 & $\begin{pmatrix} 1 \\ -1 \end{pmatrix}$ & $\begin{pmatrix} 0.44 & 0.31 \\ 0.31 & 0.44  \end{pmatrix}$ \\
 & 2 & 0.5 & $\begin{pmatrix} -1 \\ 1 \end{pmatrix}$ & $\begin{pmatrix} 0.44 & 0 \\ 0 & 0.44  \end{pmatrix}$ \\
 \hline
\end{tabular}
\end{table}

\subsubsection{Trimodal}

\begin{table}[H]\centering
\begin{tabular}{ccccc}
\hline
 & Components & $\pi_k$ & $\mu_k$ & $\Sigma_k$  \\
  \hline \vspace{0.2cm}
  & 1 & 0.43 & $\begin{pmatrix} -1 \\ 0 \end{pmatrix}$ & $\begin{pmatrix} 0.36 & 0.25 \\ 0.25 & 0.49  \end{pmatrix}$ \\ \vspace{0.2cm}
 & 2 & 0.43 & $\begin{pmatrix} 1 \\ 1.15 \end{pmatrix}$ & $\begin{pmatrix} 0.36 & 0 \\ 0 & 0.49  \end{pmatrix}$ \\
 & 3 & 0.14 & $\begin{pmatrix} 1 \\ -1.15 \end{pmatrix}$ & $\begin{pmatrix} 0.36 & 0 \\ 0 & 0.49  \end{pmatrix}$  \\
 \hline
\end{tabular}
\end{table}

\normalem
\bibliographystyle{plain}
\bibliography{ccm_arxiv}

\end{document}